\documentclass[12pt]{article}
\usepackage{amsmath}
\usepackage{amsthm}

\usepackage{graphicx}
\usepackage[round]{natbib}
\usepackage{url} 
\usepackage[T1]{fontenc}
\usepackage[latin1]{inputenc} 

\makeatletter
\g@addto@macro\th@plain{\thm@preskip=2pt \thm@postskip=2pt}     
\g@addto@macro\th@definition{\thm@preskip=2pt \thm@postskip=2pt}  
\g@addto@macro\th@remark{\thm@preskip=2pt \thm@postskip=2pt}      
\makeatother

\newtheorem{theorem}{Theorem}

\newtheorem{lemma}{Lemma}

\newtheorem{definition}{Definition}

\usepackage{amssymb}
\usepackage{bm}
\usepackage{algorithmic}
\usepackage{booktabs}
\usepackage{amsmath}
\usepackage{algorithm}
\usepackage{algorithmic}
    
\usepackage{color}
\usepackage{algorithm}
\usepackage{verbatim}
\usepackage{algorithmic}
\usepackage{makecell}
\usepackage{enumerate}
\usepackage{graphicx}
\usepackage{enumitem}
\usepackage{threeparttable}
\usepackage{multirow}
\usepackage{mathrsfs}
\usepackage{hyperref}
\usepackage{subfigure}
\usepackage{setspace}
\usepackage{fontawesome}
\usepackage[normalem]{ulem}
\usepackage{xr-hyper}
\usepackage{titlesec}
\makeatletter
\newcommand*{\addFileDependency}[1]{
  \typeout{(#1)}
  \@addtofilelist{#1}
  \IfFileExists{#1}{}{\typeout{No file #1.}}
}
\makeatother

\newcommand*{\myexternaldocument}[1]{
    \externaldocument{#1}
    \addFileDependency{#1.tex}
    \addFileDependency{#1.aux}
}

\myexternaldocument{TransNet_SM_revision}

 \hypersetup{
     colorlinks=true,
     linkcolor=blue,
     filecolor=blue,
     citecolor = darkgreen,
     urlcolor=cyan,
     }

\newcommand{\blind}{0}
\definecolor{darkgreen}{rgb}{0,0.6,0.2}

\titlespacing*{\section}    {0pt}{*0.8}{*0.4} 
\titlespacing*{\subsection} {0pt}{*0.6}{*0.3}
\titlespacing*{\subsubsection}{0pt}{*0.5}{*0.2}
\begin{document}

\def\spacingset#1{\renewcommand{\baselinestretch}%
{#1}\small\normalsize} \spacingset{1}


\if0\blind
{
  \title{\LARGE \bf Multilayer-Dynamic Network Clustering with Application to World Trade Data}
  \author{ Mengze Huang$^1$, Wenqing Su$^2$, Xiao Guo$^1$, Hai Zhang$^1$   \hspace{3cm}  \\
    $1$ School of Mathematics, Northwest University, 710127,  Shaanxi, China   \hspace{3.5cm} \\
    $2$ School of Mathematics and Statistics, Shaanxi Normal University,\\ 710119, Shaanxi, China\\}
  \maketitle
} \fi

\if1\blind
{
 \bigskip
 \bigskip
 \bigskip
 \begin{center}
   {\LARGE\bf Multilayer-Dynamic Network Clustering with Application to World Trade Data}
\end{center}
 \medskip
} \fi

\bigskip
\begin{abstract}
The rapid development of global economic integration has made international trade increasingly dynamic and interdependent. The real-world trade data sets, such as the FAO dataset, can be naturally represented as a \emph{multilayer-dynamic network} where countries are treated as nodes, trade flows between countries are represented by edges, and different products correspond to different layers. Therefore, an important problem is how to identify evolving community structures in the multilayer-dynamic trade network. However, most existing methods are designed for static multilayer networks or single-layer dynamic networks, leaving the community detection in multilayer-dynamic networks largely unexplored. Motivated by this problem, we study community detection in multilayer-dynamic networks, allowing the community structure to vary across both layers and time. We propose a novel method, \emph{MuDySC} (Multilayer-Dynamic Spectral Clustering), which smooths the eigenspace projection matrices across adjacent time points and across layers at the same time point. We develop an efficient alternating iterative algorithm for solving the resulting optimization problem and establish its convergence to the global optimum under mild conditions. We further apply MuDySC to the FAO data. The analysis reveals clear asymmetry between export and import community structures and highlights both persistent and shifting trade positions of major countries.

\end{abstract}

\noindent
{\it Keywords:} Community detection, Multilayer Network, Dynamic network, Spectral clustering
\vfill

\newpage
\spacingset{1.7} 
\setlength{\abovedisplayskip}{1.8pt}  
\setlength{\belowdisplayskip}{1.8pt}

\section{Introduction}\label{sec1}
With the continuous advancement of global economic integration, international trade has become an important driving force for economic growth across countries. It not only improves the allocation of global resources, but also promotes industrial division of labor and technological diffusion. International trade exhibits high levels of dynamism and interdependence, with trade flows between countries driven by multiple factors such as market supply and demand, and macroeconomic policies. Therefore, studying the evolutionary patterns of international trade can provide valuable insights into the changing global economic landscape.

FAOSTAT, provided by the Food and Agriculture Organization of the United Nations (FAO), is a valuable source of international agricultural trade data. 
In particular, the bilateral trade flows of different food and agricultural products between countries over time are recorded, covering 573 products across 245 countries or regions from 1986 to 
2023~\footnote{\url{https://www.fao.org/faostat/en/\#data/TM}}. For convenience, we hereafter refer to this dataset as the FAO dataset. The FAO dataset can be naturally represented as a \emph{multilayer-dynamic network} \citep{mucha2010community,boccaletti2014structure}, where countries are treated as nodes, trade flows between countries are represented as edges, different products correspond to different layers, and different years characterize the temporal evolution of the network.

Community detection or clustering is a fundamental problem in network analysis that aims to identify communities of nodes that are more densely connected or more similar to one another than to the rest of the network. Over the past decades, community detection methods for single networks have been extensively studied and developed, including modularity maximization, spectral clustering, likelihood-based methods, and semidefinite programming; see \citet{abbe2018community} for a survey. However, single networks fail to capture the temporal and multilayer features of real networks. Building on clustering methods for single networks, increasing attention has been devoted to clustering in multilayer or dynamic networks. On the one hand, community detection in static multilayer networks, namely, networks with multiple layers sharing a common set of nodes, has been widely studied by  \citet{han2015consistent,paul2016consistent,paul2020spectral,arroyo2021inference,macdonald2022latent,lei2023bias,huang2023spectral,zhang2024consistent,agterberg2025joint,wu2025privacy}, among others. These works achieve improved clustering accuracy over methods based on single networks by assuming a common community structure across layers. On the other hand, community detection in dynamic networks, namely, networks that evolve over time on a common set of nodes, has been studied by \citet{aynaud2013communities,xu2014dynamic,liu2018global,zhang2017finding,pensky2019spectral,zhang2024fast,lin2026dynamic}, among others; see also references therein. 
In these works, the community structure is allowed to be heterogeneous and to vary over time.

The aforementioned works focus on either static multilayer networks or single-layer dynamic networks, and thus cannot simultaneously capture the temporal and multilayer structure of complex networks.  Recently, only a few works have paid attention to the analysis of multilayer-dynamic networks. \citet{loyal2023eigenmodel} proposed a latent space model for multilayer-dynamic networks, which captures the common time-varying structure shared across layers while accommodating layer-specific variation and degree heterogeneity. \citet{zheng2024dynamic} considered the clustering of layers in the  multiple dynamic networks, where layers are grouped according to their common evolutionary patterns over time. \citet{wang2026changepoint} investigated change-point localization and inference in multilayer-dynamic networks. However, the focus of these existing works is different from ours, and the problem of node-level clustering in multilayer-dynamic networks remains largely underexplored.

Motivated by the FAO dataset, we study community detection in multilayer-dynamic networks, allowing the community structure to vary across layers and over time. By tracking the evolution of these communities, we can gain insight into the underlying organization and development of the international trade system. To address this problem, we propose an optimization framework in which the eigenspaces are smoothed across both layers and time points, which extends the method in \citet{liu2018global} to accommodate multilayer network structures. The resulting optimization problem can be efficiently solved by an alternating iterative algorithm. We refer to the proposed method as \textbf{Mu}ltilayer-\textbf{Dy}namic \textbf{S}pectral \textbf{C}lustering (MuDySC). Theoretically, we establish the convergence of MuDySC. Simulation studies show that MuDySC outperforms methods that exploit only partial information from the multilayer-dynamic network. To demonstrate the practical utility of MuDySC, we apply it to the FAO dataset. Specifically, we study the import and export relationships of 23 vegetable oil products across 131 countries or regions over the five-year period from 2019 to 2023, yielding a multilayer-dynamic trade network with 23 layers, 5 time points, and 131 nodes. In particular, we analyze the evolution of communities in the olive-oil trade network and obtain the following main findings. 

First, our analysis reveals a pronounced asymmetry between the community structures of the export and import olive-oil trade networks. Specifically, the export network exhibits a more pronounced community structure, with denser within-community connections, whereas the import network shows a weaker community structure. This finding is consistent with the structure of the global olive-oil market, in which supply is concentrated in a small number of producing countries, while import demand is distributed across a much broader set of countries.
Second, our analysis identifies several stable community patterns over the study period. In both the export and import olive-oil trade networks, China remains in the same community as Japan and Australia throughout 2019-2023, which may be related to their common position within the Asia-Pacific economic region. By contrast, China, a major olive-oil importer, is never grouped into the same community as Greece, a typical olive-oil producing country, reflecting their persistently different roles in the trade network. Third, our analysis also identifies structural changes in the community memberships of individual countries. For example, Russia's community membership in the export trade network changes markedly after 2021. From 2019 to 2021, Russia belongs to the same community as its European neighboring countries. In 2022, however, it separates from that community and is assigned to a community that also includes China. By 2023, its community membership changes again, with Russia grouped together with only a small number of neighboring countries, thereby forming a relatively small and distinct community. This pattern may be associated with the reorganization of trade relationships following the Russia-Ukraine conflict in 2022.

The remainder of this paper is organized as follows. Section \ref{sec2} provides the optimization framework of MuDySC method, as well as the algorithm and convergence analysis. Section \ref{sec4} applies MuDySC to the FAO dataset. Section \ref{sec3} presents the numerical experiments. Section \ref{sec5} concludes the paper. The proofs are all included in the Appendix.

\section{Spectral clustering for multilayer-dynamic networks}\label{sec2}

In this section, we provide the optimization framework of MuDySC method for clustering multilayer-dynamic networks, as well as the algorithm and convergence analysis. 

Suppose that a multilayer-dynamic network consists of \(M\) layers, \(T\) time points, and \(n\) aligned nodes, with adjacency matrices \(\{A_{m,t}\}_{m=1,t=1}^{M,T}\). Specifically, $A_{m,t}\in \{0,1\}^{n\times n}$ denotes the adjacency matrix of the network at layer $m$ and time stamp $t$. The Laplacian matrix of $A_{m,t}$ is denoted by $L_{m,t}$ as follows
\begin{equation*}
L_{m,t} = D^{-1/2} \, A_{m,t} \, D^{-1/2}.
\end{equation*}
Let $V_{m,t}\in\mathbb R^{n\times K}$ be the matrix of the top-$K$ eigenvectors of $L_{m,t}$ and define the projection matrix $U_{m,t}: = V_{m,t} V_{m,t}^{T}.$

To jointly incorporate the smoothness across layers and time points, we aim to obtain the smoothed projection matrices \(\{\overline{U}_{m,t}\}_{m=1,t=1}^{M,T}\) by minimizing

\begin{equation}
\begin{aligned}
&\min_{\overline{U}_{m,t}} \ 
\sum_{m = 1}^{M}\sum_{t = 1}^{T}\|U_{m,t}-\overline{U}_{m,t}\|_F^2 + \alpha\sum_{m = 1}^{M}\sum_{t = 1}^{T - 1}\|\overline{U}_{m,t}-\overline{U}_{m,t + 1}\|_F^2 
+\beta\cdot\frac{1}{M-1}\sum_{t = 1}^{T}\sum_{i<j}\|\overline{U}_{i,t}-\overline{U}_{j,t}\|_F^2\\
 &{\rm subject\; to}\quad \overline{U}_{m,t} \in \left\{ VV^{T} : V \in \mathbb{R}^{n \times k},\; V^{T}V = I_K \right\}\quad {\rm for}\quad m=1,...,M; \; t=1,...,T,
\end{aligned}
\label{eq6}
\end{equation}
where $\alpha>0$ and $\beta>0$ are the tuning parameters, \(\|\cdot\|_{F}\) denotes the Frobenius norm, and $I_K$ denotes the identity matrix of dimension $K$. The three terms in \eqref{eq6} can be explained as follows. The first term enforces that the smoothed projection matrices are close to the original ones. The second term ensures that for each given layer $m$, the smoothed projection matrices are close between adjacent time points $t$ and $t+1$. The third term encourages that for each given time point $t$, the smoothed projection matrices are close across layers, where the multiplicative factor $1/(M-1)$ is introduced only for rescaling, so that the third term is generally balanced with the second term. 

After obtaining the smoothed projection matrices $\overline{U}_{m,t}$'s, we can  extract their top-$K$ eigenvectors and apply $K$-means to obtain the $K$ clusters for each static network at layer \(m\in \{1,...,M\}\) and time \(t\in \{1,...,T\}\). We name this procedure MuDySC, short for \textbf{Mu}ltilayer-\textbf{Dy}namic \textbf{S}pectral \textbf{C}lustering.

The optimization problem in \eqref{eq6} can be solved by an alternating iterative algorithm. The basic idea is to update one smoothed projection matrix \(\overline{U}_{m,t}\) at a time while keeping all the others fixed at their current values. Under this blockwise updating scheme, the objective function involving \(\overline{U}_{m,t}\) depends only on the original projection matrix \(U_{m,t}\), the smoothed projection matrices at adjacent time points in the same layer, and the smoothed projection matrices from other layers at the same time point. Therefore, each update step amounts to solving a low-dimensional subproblem for a single pair \((m,t)\), which can be carried out efficiently. Repeating this procedure over all \(m=1,\ldots,M\) and \(t=1,\ldots,T\) yields an alternating iterative algorithm, which is continued until convergence. Specifically, denote the current estimates by \(\{\overline{U}_{m,t}^{\,l}\}\). Then for different $t$, we can update \(\{\overline{U}_{m,t}^{\,l}\}\) to \(\{\overline{U}_{m,t}^{\,l+1}\}\) according to the following rules:
\begin{equation}
\begin{aligned}
\overline{U}_{m,1}^{\,l+1}
&=
\Pi_{K}\Big(
U_{m,1}
+\alpha \overline{U}_{m,2}^{\,l}
+\frac{\beta}{M-1}\sum_{i\neq m}\overline{U}_{i,1}^{\,l}
\Big), \quad t=1, \\
\overline{U}_{m,t}^{\,l+1}
&=
\Pi_{K}\Big(
U_{m,t}
+\alpha \overline{U}_{m,t-1}^{\,l}
+\alpha \overline{U}_{m,t+1}^{\,l}
+\frac{\beta}{M-1}\sum_{i\neq m}\overline{U}_{i,t}^{\,l}
\Big), \quad t=2,\ldots,T-1, \\
\overline{U}_{m,T}^{\,l+1}
&=
\Pi_{K}\Big(
U_{m,T}
+\alpha \overline{U}_{m,T-1}^{\,l}
+\frac{\beta}{M-1}\sum_{i\neq m}\overline{U}_{i,T}^{\,l}
\Big),\quad t=T,
\end{aligned}
\label{eq7}
\end{equation}
where \(\Pi_{K}(H):=V_K V_K^\top\) provided that $V_K$ consists of the top-$K$ eigenvectors of $H$. 

The following result shows that the alternating iterative algorithm converges to the global minimum of~\eqref{eq6}. 

\begin{theorem}\label{thm1}
When $2\alpha + \beta < \frac{1}{1+2\sqrt{2}}$, the iterative algorithm defined through \eqref{eq7} converges to the global minimum of \eqref{eq6}.
\end{theorem}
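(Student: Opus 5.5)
We will follow the strategy used for the single-layer dynamic case in \citet{liu2018global} and show that the map defined by \eqref{eq7} is a contraction on a neighbourhood of the unsmoothed projections. The global-optimality claim then comes from a separate uniqueness argument. Stack all blocks as $\mathbf{U}=(\overline U_{m,t})_{m,t}$ and equip this product space with the norm $\|\mathbf{U}\|=\max_{m,t}\|\overline U_{m,t}\|_F$. Write the update \eqref{eq7} as $\mathbf{U}^{l+1}=\Phi(\mathbf{U}^l)$, where $\Phi_{m,t}(\mathbf{U})=\Pi_K(U_{m,t}+E_{m,t}(\mathbf{U}))$. Here $E_{m,t}$ collects the $\alpha$-weighted temporal neighbours and the $\beta/(M-1)$-weighted cross-layer terms. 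Every $\overline U$ is a rank-$K$ projection, and the neighbour weights satisfy $2\alpha+\beta\cdot\frac{M-1}{M-1}=2\alpha+\beta$. Hence $\|E_{m,t}(\mathbf{U})\|_2\le 2\alpha+\beta=:\gamma$, and also $\|E_{m,t}(\mathbf{U})-E_{m,t}(\mathbf{U}')\|_F\le\gamma\|\mathbf{U}-\mathbf{U}'\|$.

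The key tool is a Davis--Kahan type perturbation bound for $\Pi_K$. The matrix $U_{m,t}$ has eigenvalues $1$ (multiplicity $K$) and $0$, so its eigengap is $1$. Once a perturbation $E$ has $\|E\|_2\le\gamma<1/2$, the gap of $U_{m,t}+E$ between the $K$th and $(K+1)$th eigenvalues stays at least $1-2\gamma>0$, and $\Pi_K$ is well defined there. For two such perturbations $E,E'$, a sin-$\Theta$ argument should give
\[
\|\Pi_K(U+E)-\Pi_K(U+E')\|_F\le \frac{c\,\|E-E'\|_F}{1-2\gamma}
\]
with $c=\sqrt2$, coming from the Frobenius relation $\|P-P'\|_F=\sqrt2\|\sin\Theta\|_F$. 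I would then check that the constant works out so that the Lipschitz constant of $\Phi$ is $L=\frac{\sqrt2\,\gamma}{1-2\gamma}$. The condition $L<1$ is equivalent to $\gamma(\sqrt2+2)<1$, which differs slightly from the stated $\gamma<1/(1+2\sqrt2)$. So the precise bookkeeping is the main obstacle. I expect the paper's constant arises either from a cruder gap $1-\gamma-\gamma\sqrt{\cdot}$ or from a Lipschitz bound of the form $2\sqrt2\gamma/(1-\gamma)$, and I would tune the perturbation lemma until $L=\frac{2\sqrt2\gamma}{1-\gamma}<1\iff\gamma<\frac{1}{1+2\sqrt2}$. The version of the perturbation bound I would aim to prove is therefore
\[
\|\Pi_K(U+E)-\Pi_K(U+E')\|_F\le\frac{2\sqrt2\,\|E-E'\|_F}{1-\gamma}\quad(\|E\|_2,\|E'\|_2\le\gamma).
\]
I would obtain it by bounding each projection's distance through the resolvent (Cauchy integral) representation on a contour separating $[1-\gamma,1+\gamma]$ from $[-\gamma,\gamma]$.

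With $\Phi$ a contraction on the closed set of tuples of rank-$K$ projections (a complete metric space), Banach's fixed-point theorem gives a unique fixed point $\mathbf{U}^*$. It also gives linear convergence of the iterates $\mathbf{U}^l\to\mathbf{U}^*$ from any initialization, in particular from $\overline U_{m,t}^0=U_{m,t}$.

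It remains to identify $\mathbf{U}^*$ with the global minimizer. Because $\|\overline U\|_F^2=K$ is constant on the feasible set, the objective \eqref{eq6} equals a constant minus $2\sum_{m,t}\langle\overline U_{m,t},U_{m,t}\rangle$ minus the $\alpha$ and $\beta/(M-1)$ cross inner products. Minimizing over a single block with the others fixed is therefore maximizing $\langle\overline U_{m,t},U_{m,t}+E_{m,t}\rangle$, which Ky Fan's theorem solves by $\Pi_K(U_{m,t}+E_{m,t})$. Hence any global minimizer (one exists by compactness) is blockwise optimal and is a fixed point of the Jacobi map $\Phi$. Since the fixed point is unique, the global minimizer equals $\mathbf{U}^*$, and the iterates converge to it.
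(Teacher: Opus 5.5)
Your architecture is the paper's: Ky Fan blockwise optimality shows the minimizer is a fixed point, Davis--Kahan gives a contraction on the compact set of tuples of rank-$K$ projections, and Banach's theorem finishes. Your max-over-blocks metric works as well as the paper's sum-over-blocks metric. The flaw is in how you finish the contraction step, because you misjudged the threshold comparison. Your own bound $\|\Pi_K(U+E)-\Pi_K(U+E')\|_F\le\sqrt2\|E-E'\|_F/(1-2\gamma)$ is correct. It gives $L=\sqrt2\gamma/(1-2\gamma)<1$ if and only if $\gamma<1/(2+\sqrt2)$. Since $1+2\sqrt2>2+\sqrt2$, the hypothesis $\gamma<1/(1+2\sqrt2)$ implies this. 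So nothing needed tuning: at that point you already had a complete proof, under a slightly weaker hypothesis than the theorem's.

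The replacement lemma you then commit to is the step that would fail. For arbitrary symmetric $E,E'$ with $\|E\|_2,\|E'\|_2\le\gamma$, the bound $2\sqrt2\|E-E'\|_F/(1-\gamma)$ is false when $\gamma$ is near $1/2$. Take $n=2$, $K=1$, $U=\mathrm{diag}(1,0)$, $E=\begin{pmatrix}-a&b\\ b&a\end{pmatrix}$, and $E'$ the same with $b$ replaced by $-b$, where $a^2+b^2=\gamma^2$. For small $b$, the ratio $\|\Pi_1(U+E)-\Pi_1(U+E')\|_F/\|E-E'\|_F$ is about $1/(1-2\gamma)$. This exceeds $2\sqrt2/(1-\gamma)$ once $\gamma>(2\sqrt2-1)/(4\sqrt2-1)\approx0.39$. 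In the theorem's range the inequality does hold, but only because your first bound implies it there.

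Likewise, a resolvent contour separating $[1-\gamma,1+\gamma]$ from $[-\gamma,\gamma]$ naturally produces a denominator governed by their separation $1-2\gamma$, not $1-\gamma$. The paper's $1-\gamma$ comes from a fact you did not use. $E_{m,t}$ is a nonnegative combination of projections, hence positive semidefinite, so Weyl's inequality gives $\lambda_K(U_{m,t}+E_{m,t})\ge1$ and $\lambda_{K+1}(U_{m,t}+E_{m,t})\le\gamma$. The paper's $2\sqrt2$ is the Davis--Kahan $\sqrt2$ times the factor $2$ in $\|VV^\top-V'V'^\top\|_F\le2\|VR-V'\|_F$.

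So drop the tuned lemma. Either keep your $1-2\gamma$ bound, or use positive semidefiniteness. The latter, combined with your sharper constant $\sqrt2$ from $\|P-P'\|_F=\sqrt2\|\sin\Theta\|_F$, even gives contraction for $\gamma<1/(1+\sqrt2)$.
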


\emph{Proof sketch of Theorem \ref{thm1}}: 
We will first write the alternating iterative rules as an operator $G:\mathbb R^{M\times T\times K}\rightarrow \mathbb R^{M\times T\times K}$. Then we show that the global minimum of \eqref{eq7} must be a fixed point of $G$. After that, we show that $G$ is a contraction mapping and has a unique fixed point. Finally, it follows that the unique fixed point of $G$ is the
global minimum of \eqref{eq7}. The detailed proof can be found in Appendix ~\ref{sec8}.

\section{Application to FAO dataset}\label{sec4}
In this section, we apply the proposed MuDySC method to the FAO dataset. In particular, we focus on the import and export trade of 23 vegetable oil products across 131 countries or regions over the five-year period from 2019 to 2023.

\subsection{Data description}\label{subsec1}

The FAOSTAT~\footnote{\url{https://www.fao.org/faostat/}},  provided by the Food and Agriculture Organization of the United Nations (FAO), is one of the most authoritative international platforms for agricultural and food trade statistics, covering over 20 thematic areas and more than 4,000 indicators on production, trade, land use, and population nutrition.
In particular, the trade sub-database Detailed Trade Matrix (DTM)~\footnote{\url{https://www.fao.org/faostat/en/\#data/TM}} of FAOSTAT provides bilateral import and export volumes and values for 573 agricultural and food commodities (including cereals, dairy, and meat products) from 1986 to 2023 across 245 countries or regions, encompassing over 113 million trade records.  
Following the Central Product Classification (CPC) standard established by the United Nations Statistical Commission, these 573 products are categorized into multiple product groups.  
The categories and the number of products in each category are summarized in Table~\ref{tab1}.

\begin{table}[!h]
\begin{center}
\begin{minipage}{1.05\textwidth}
\caption{The CPC and the number of products in each category.}\label{tab1}%
\begin{tabular}{@{}cccc@{}}
\toprule
Category & Number & Category & Number \\
\midrule
Ice and Snow & 1 & Foodstuffs & 14 \\
Industrial Oils and Fats & 2 & Food Industry Waste & 75 \\
Textile Raw Materials & 6 & Animal Fodder & 29 \\
Industrial Raw Materials & 9 & Natural Rubber & 1 \\
Animal Products & 67 & Tobacco & 3 \\
Essential Oils and Resins/Balsams & 2 & Beverages, Spirits, and Vinegar & 10 \\
Meat Products and Fats & 147 & Vegetable Products & 178 \\
Dairy Products & 29 &  &  \\

\hline
\end{tabular}
\end{minipage}
\end{center}
\end{table}

\subsection{Data preprocessing}\label{subsec2}
From the DTM of FAOSTAT, we extract the international trade data for 23 vegetable oil products from the ``Meat Products and Fats'' category over the period 2019-2023. The raw data are preprocessed as follows.

First, the countries or regions are regarded as the nodes of the potential multilayer-dynamic network. For each given year and each given product, if the export value of this product from country $i$ to country $j$ exceeds USD 100{,}000, a directed edge from $i$ (exporter) to $j$ (importer) is established. We regard countries that participate in trade in fewer than 4 product categories as low-activity nodes, and remove countries or regions with degree smaller than the resulting threshold of 8, yielding a final set of $n=131$ nodes. The directed relationships are thus represented by \emph{bipartite} multilayer-dynamic network with $n=131,M=23,T=5$. In particular, we define $X_{m,t} \in \{0,1\}^{n \times n}$ for $m\in \{1,...,23\}$ and $T\in\{1,...,5\}$, where the $(i,j)$th entry  
$[X_{m,t}]_{i,j}=1$ indicates that there exists a direct edge from country $i$ to $j$ for product $m$ at time $t$. Finally, to analyze the import and export community patterns separately, we follow the convention to define the adjacency matrices \(A_{m,t}^{im}=\operatorname{sign}(X_{m,t}^{\top}X_{m,t})\) and \(A_{m,t}^{ex}=\operatorname{sign}(X_{m,t}X_{m,t}^{\top})\) for the import and export multilayer-dynamic trade networks, respectively, where the indicator function sign is applied entrywise.

\subsection{Sparsity Analysis}\label{subsec3}
We first provide a descriptive analysis of the sparsity of the import and export trade networks. We define \(S^{im}=1-\rho\!\left(\sum_{m,t} A_{m,t}^{im}\right)\) and \(S^{ex}=1-\rho\!\left(\sum_{m,t} A_{m,t}^{ex}\right)\) as the sparsities of the import and export trade networks, respectively, where \(\rho(A)\) denotes the density of \(A\), namely, the proportion of observed edges relative to the maximum possible number of edges. The results show that $S^{im}=0.8290$ and $S^{ex}=0.9661$.

Recall the definitions of \(A_{m,t}^{im}\) and \(A_{m,t}^{ex}\). We see that \([A_{m,t}^{im}]_{ij}=1\) if and only if countries \(i\) and \(j\) import from at least one common country, while \([A_{m,t}^{ex}]_{ij}=1\) if and only if countries \(i\) and \(j\) export to at least one common country. With this interpretation, the sparsities of the two networks provide the following insights. 
The export trade network is more sparse, indicating that vegetable oil export relationships are more dispersed. This is because the demand for vegetable oils is widespread and the export destination patterns are shaped by factors such as geographic location and market structure, which make it less likely for two countries to share common export destinations. By contrast, the import trade network is less sparse than the export trade network, indicating that import sources are more concentrated. Since vegetable oil production is highly concentrated in a small number of countries, countries are more likely to share common import origins.

\subsection{Determination of the number of communities}\label{subsec4}
We use the scree plot to determine the number of communities within the import and export trade networks, respectively. Specifically, we compute the eigenvalues of the aggregated matrix $\sum_{m,t} A_{m,t}^{im}$ and $\sum_{m,t} A_{m,t}^{ex}$, respectively. The top 20 eigenvalues, ordered from largest to smallest, are shown in Figure~\ref{fig3}. The figure exhibits a clear elbow at 5, indicating that the first five eigenvalues capture most of the structural information in the network. Therefore, we set the number of communities to \(K=5\) for both the import and export trade networks in the subsequent analysis.

\begin{figure}[!htbp]{}
\centering
\subfigure[Export trade network]{\includegraphics[height=5cm,width=6.5cm,angle=0]{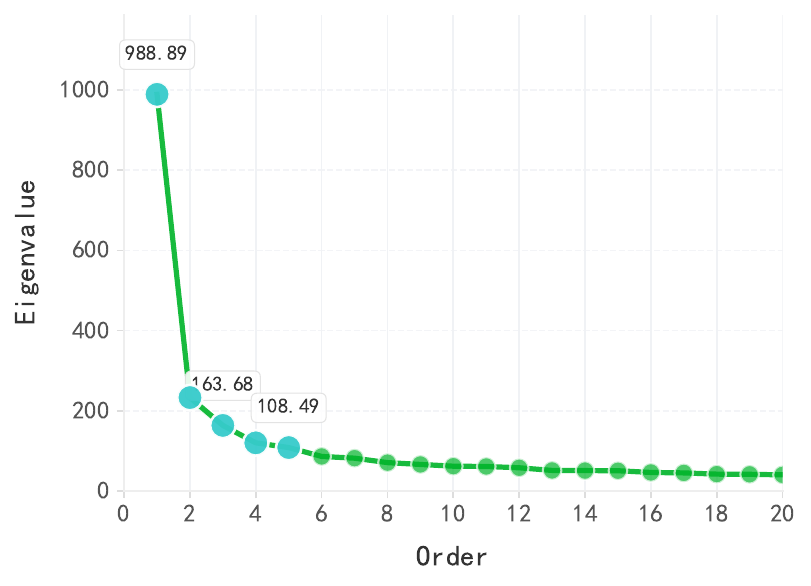}}
\subfigure[Import trade network]{\includegraphics[height=5cm,width=6.5cm,angle=0]{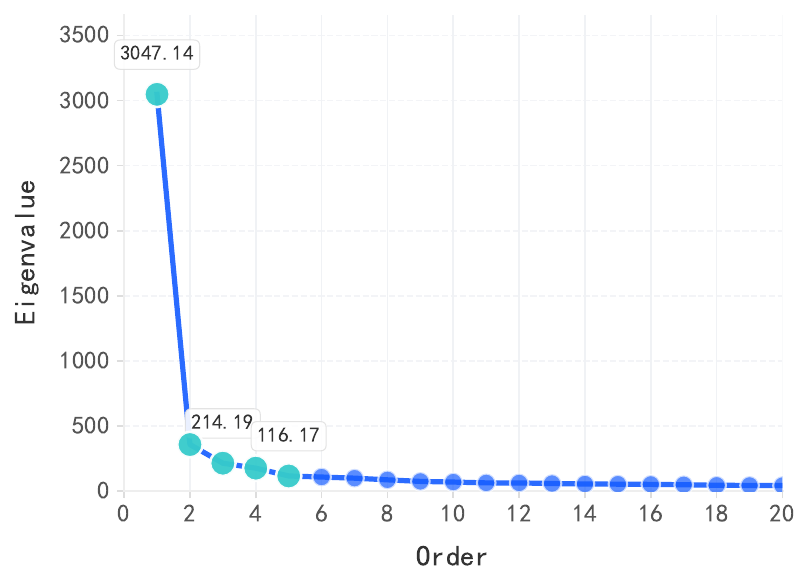}}
\caption{Scree plots of the leading eigenvalues for the (a) export trade network and (b) import trade network.}
\label{fig3}
\end{figure}

\subsection{Community detection}\label{subsec5}
We apply the proposed MuDySC method to the multilayer-dynamic import and export trade networks, each of which consists of 131 nodes, 23 layers, and 5 time points. The tuning parameters $\alpha$ and $\beta$ in \eqref{eq6} are selected using cross validation \citep{chen2018network}.

Next, we take the olive-oil as an example to analyze its community structure and evolution, which is one of the vegetable oil products that are traded globally at a large scale and involve a wide range of countries.
We first examine the overall community patterns in the import and export multilayer-dynamic olive-oil trade networks, and then analyze the evolution of communities in both the export and import networks in more detail.

\textbf{Overall community patterns.} 
To illustrate the overall community patterns in the olive-oil trade networks, we present heatmaps of the \emph{export and import matrices} from 2019 to 2023 in Figures~\ref{fig7} and \ref{fig9}, respectively. The rows and columns correspond to countries, which are reordered according to the detected communities so that countries in the same community are placed together. The dashed lines separate the five communities. {Each cell in Figure~\ref{fig7} (resp. Figure~\ref{fig9}) represents the export (resp. import) trade value from the country in the row to the country in the column}, with darker colors indicating larger trade values.

\begin{figure}[tbp]{}
\centering
\subfigure[2019]{\includegraphics[height=3.5cm,width=4cm,angle=0]{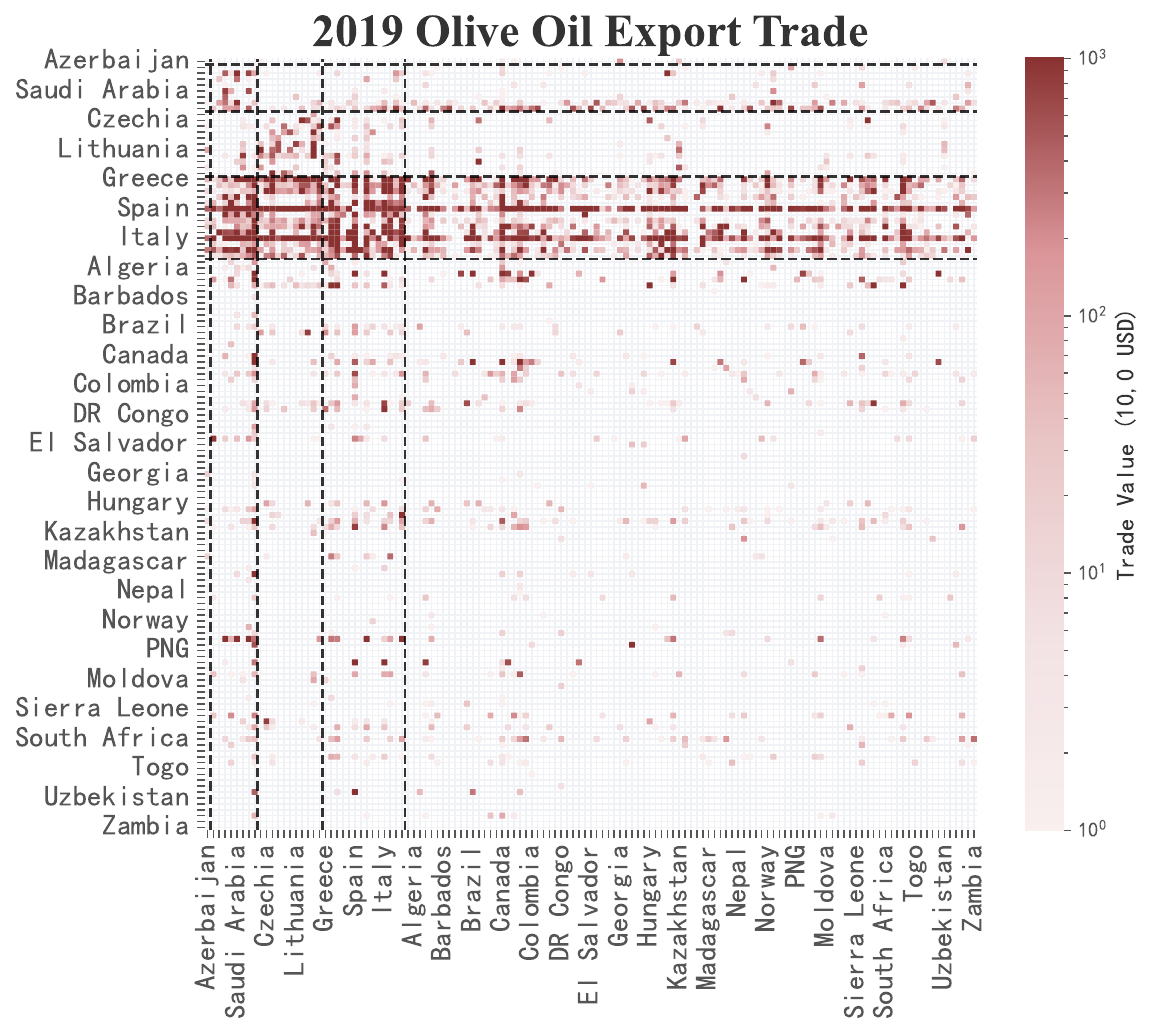}}
\subfigure[2020]{\includegraphics[height=3.5cm,width=4cm,angle=0]{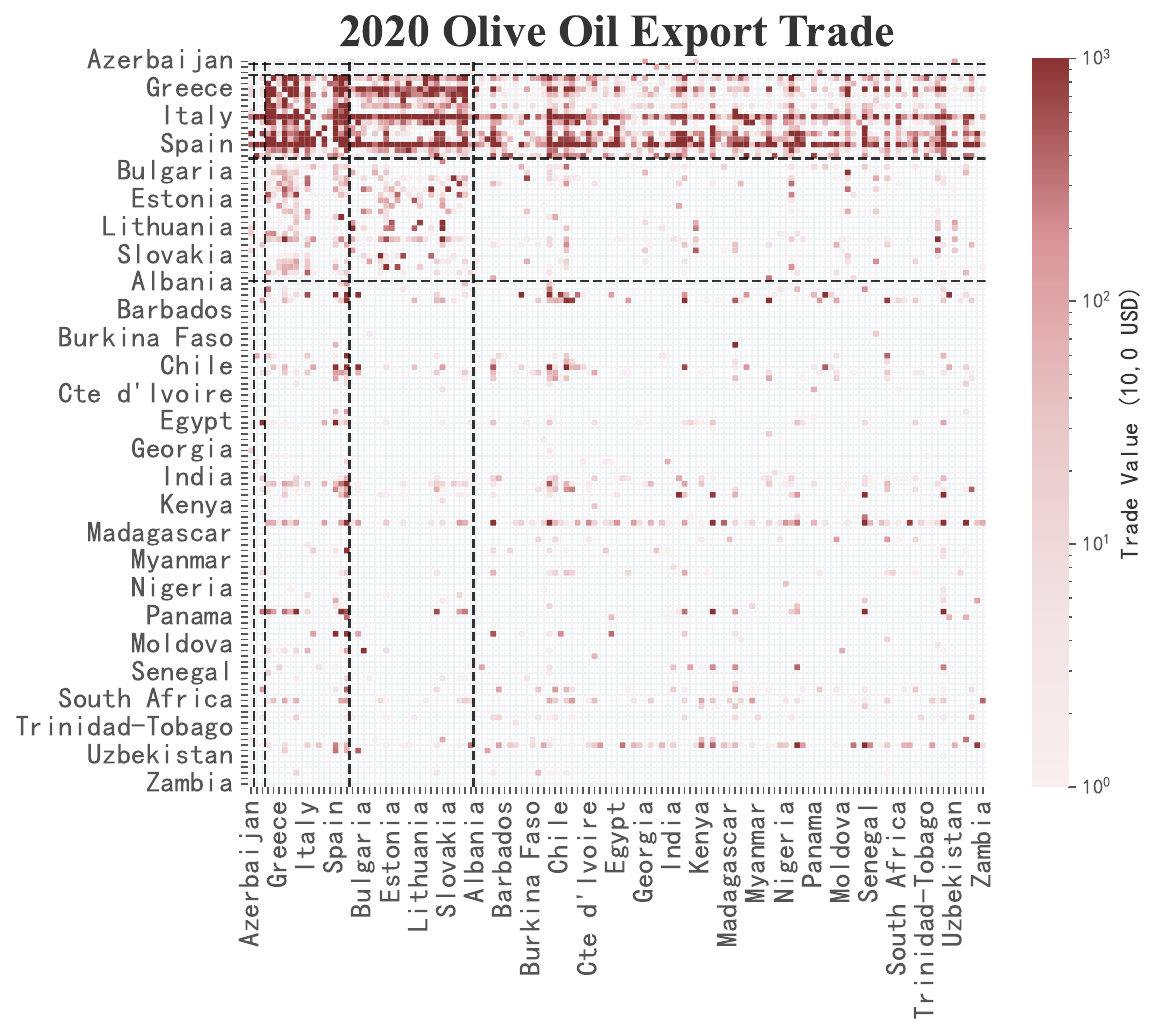}}
\subfigure[2021]{\includegraphics[height=3.5cm,width=4cm,angle=0]{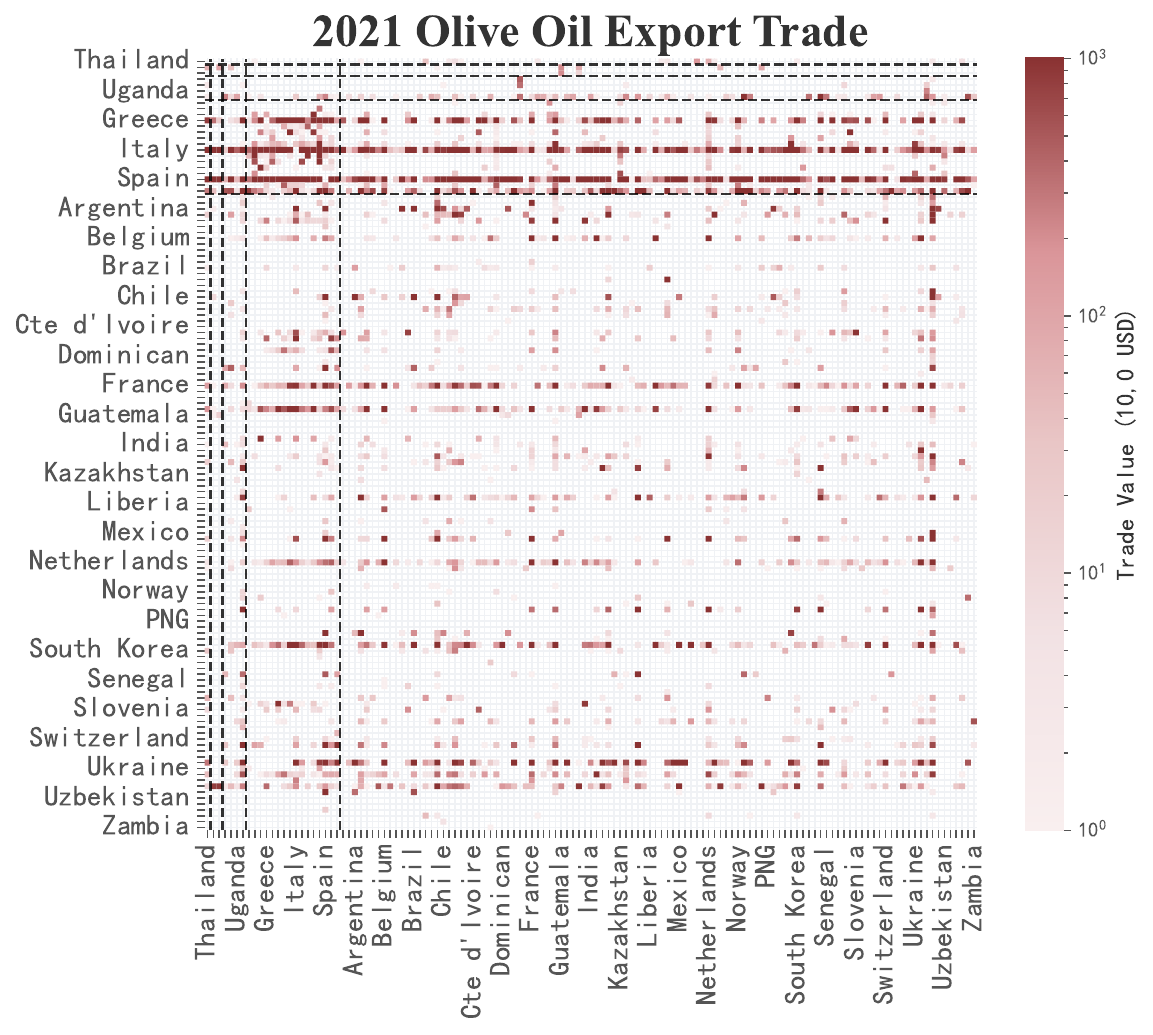}}
\subfigure[2022]{\includegraphics[height=3.5cm,width=4cm,angle=0]{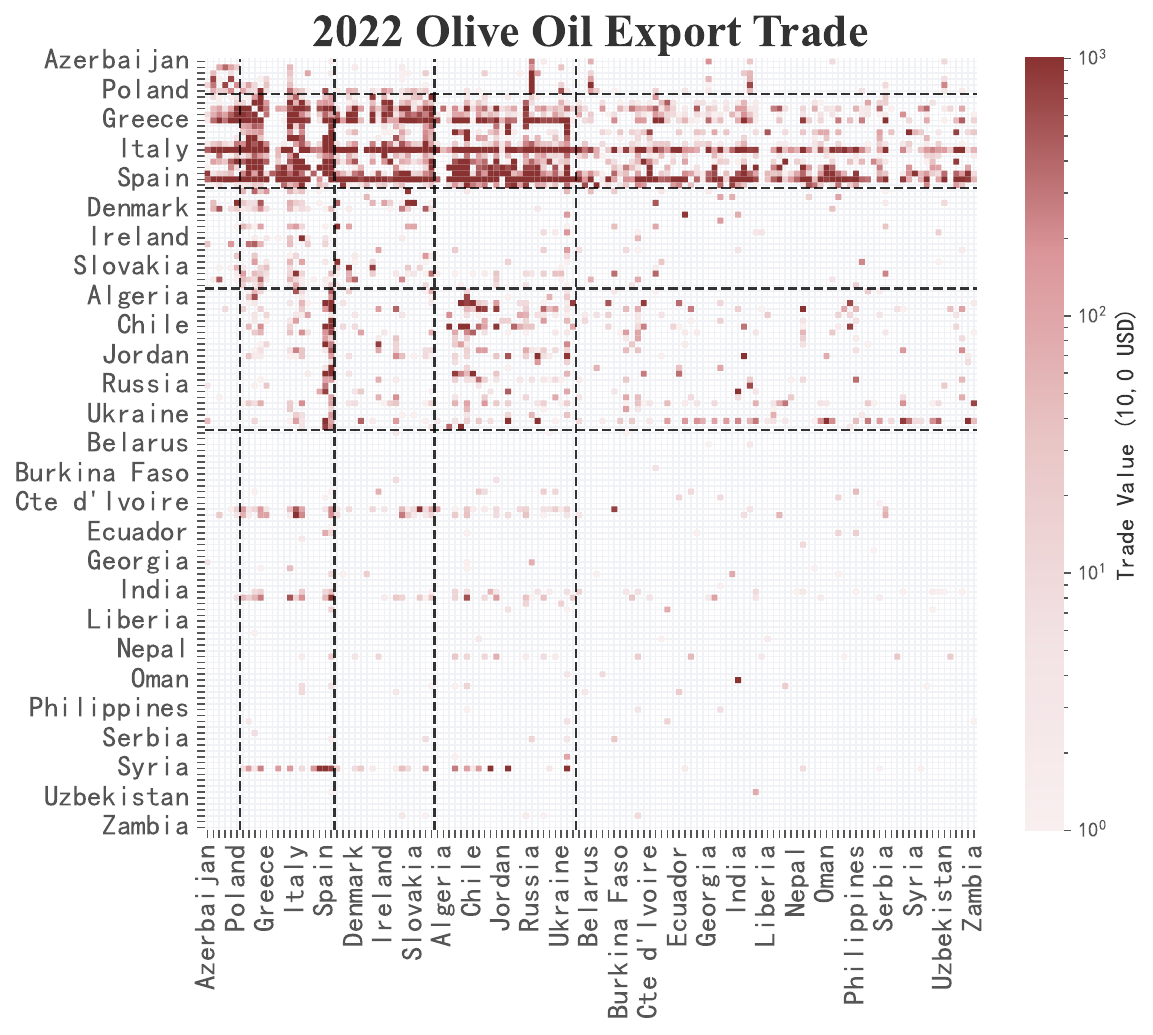}}
\subfigure[2023]{\includegraphics[height=3.5cm,width=4cm,angle=0]{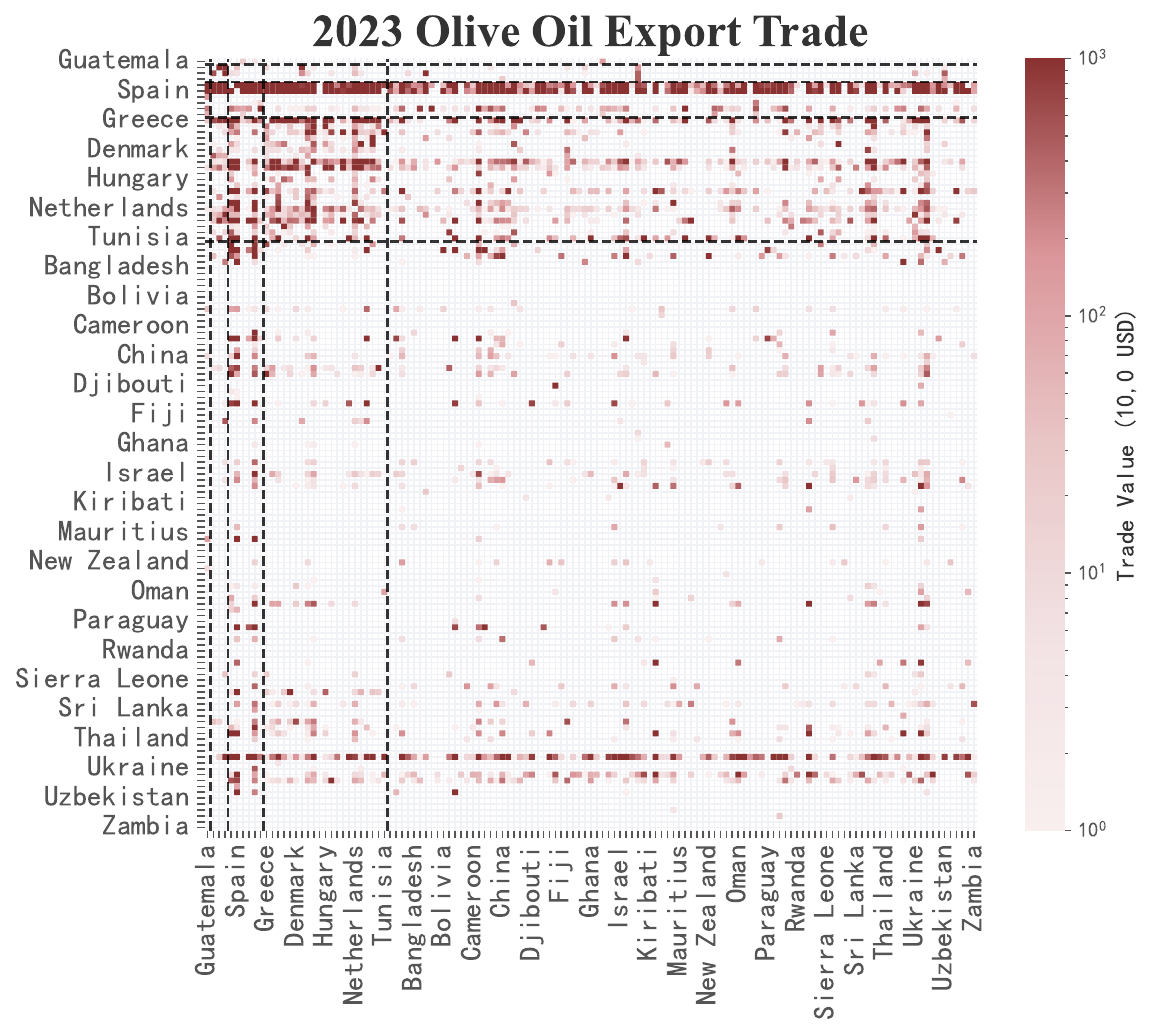}}
\caption{Heatmaps of olive-oil export values from 2019 to 2023. {Each cell represents the export trade value from the country in the row to the country in the column.} Darker colors indicate higher export values, and the dashed lines separate the five detected export communities.}
\label{fig7}
\end{figure}

We have the following observations. First, on the export side, countries within the same community exhibit more similar export patterns (i.e., rows), whereas countries from different communities are much less similar in their export patterns. In particular, the community containing major olive-oil producers such as Italy and Spain displays the darkest color, indicating that countries in this community account for relatively large export values to other countries. Second, on the import side, the differences across communities are less pronounced than those on the export side, suggesting that import patterns (i.e., rows) are more similar across countries. A main reason is that the sources of olive-oil supply are highly concentrated, with most countries relying on a small number of core producers, such as Spain, Italy, and Greece.

In addition to the findings above, the results indicate that, in both the import and export trade networks from 2019 to 2023, China is consistently grouped into the same community as Japan and Australia, indicating strong similarity in the olive-oil trade patterns of these three countries. This may be related to the fact that all three countries belong to the Asia-Pacific economic sphere. By contrast, China is never assigned to the same community as Greece during this five-year period, reflecting a substantial difference in their roles in the olive-oil trade network. Greece is a major olive-oil exporter, whereas China is primarily an importer, which may explain why the two countries remain in different communities over time.

\begin{figure}[tbp]{}
\centering
\subfigure[2019]{\includegraphics[height=3.5cm,width=4cm,angle=0]{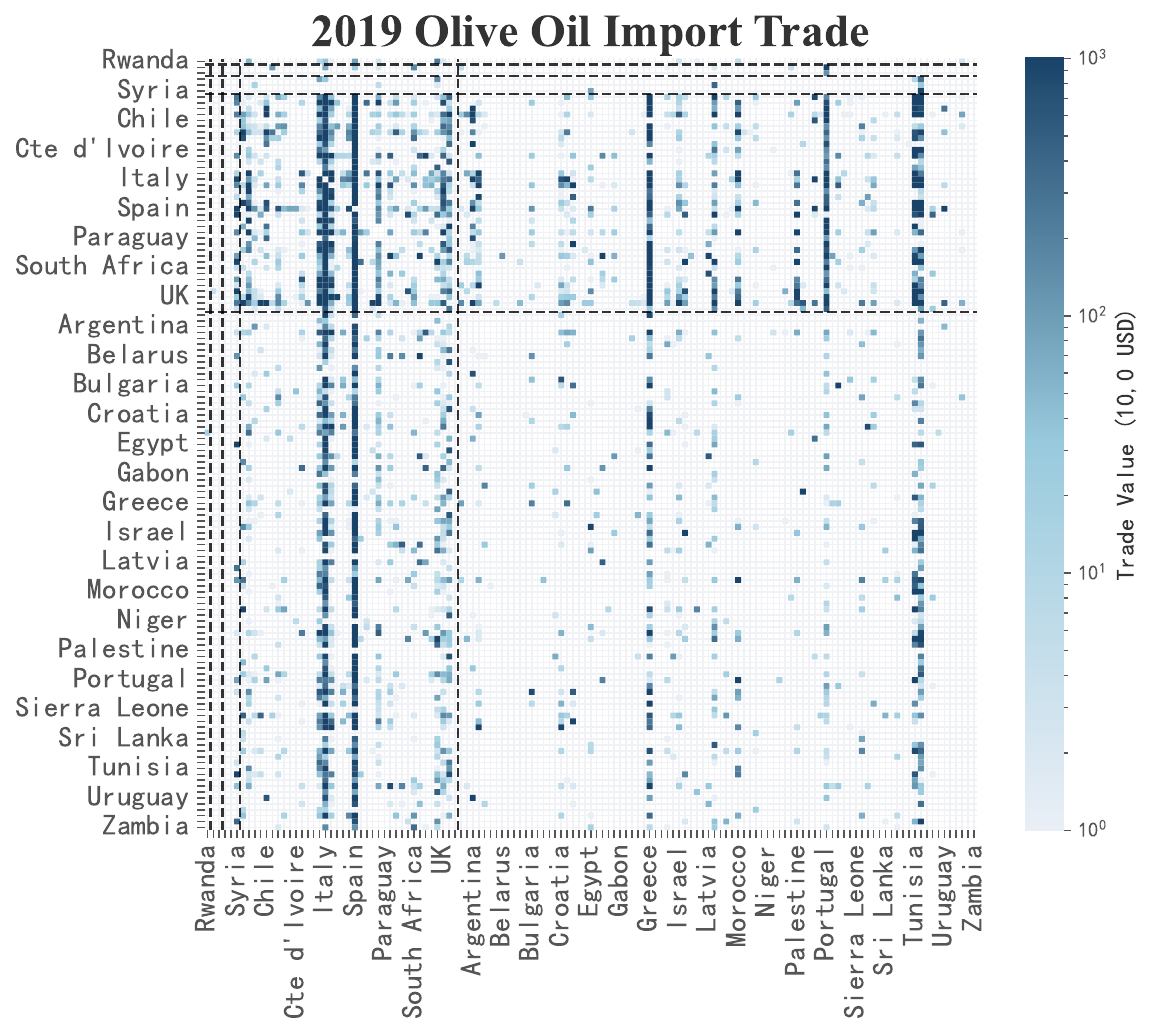}}
\subfigure[2020]{\includegraphics[height=3.5cm,width=4cm,angle=0]{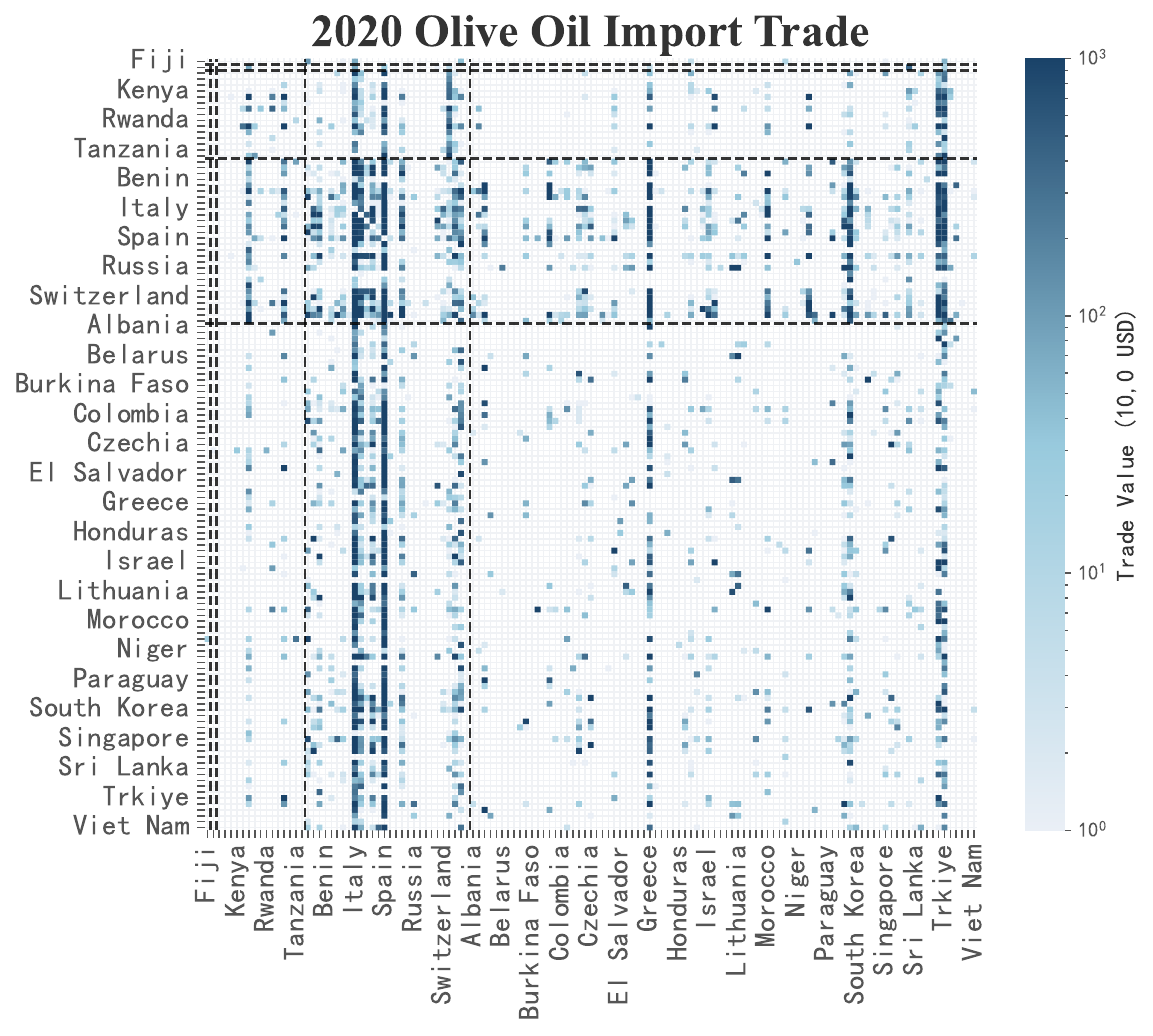}}
\subfigure[2021]{\includegraphics[height=3.5cm,width=4cm,angle=0]{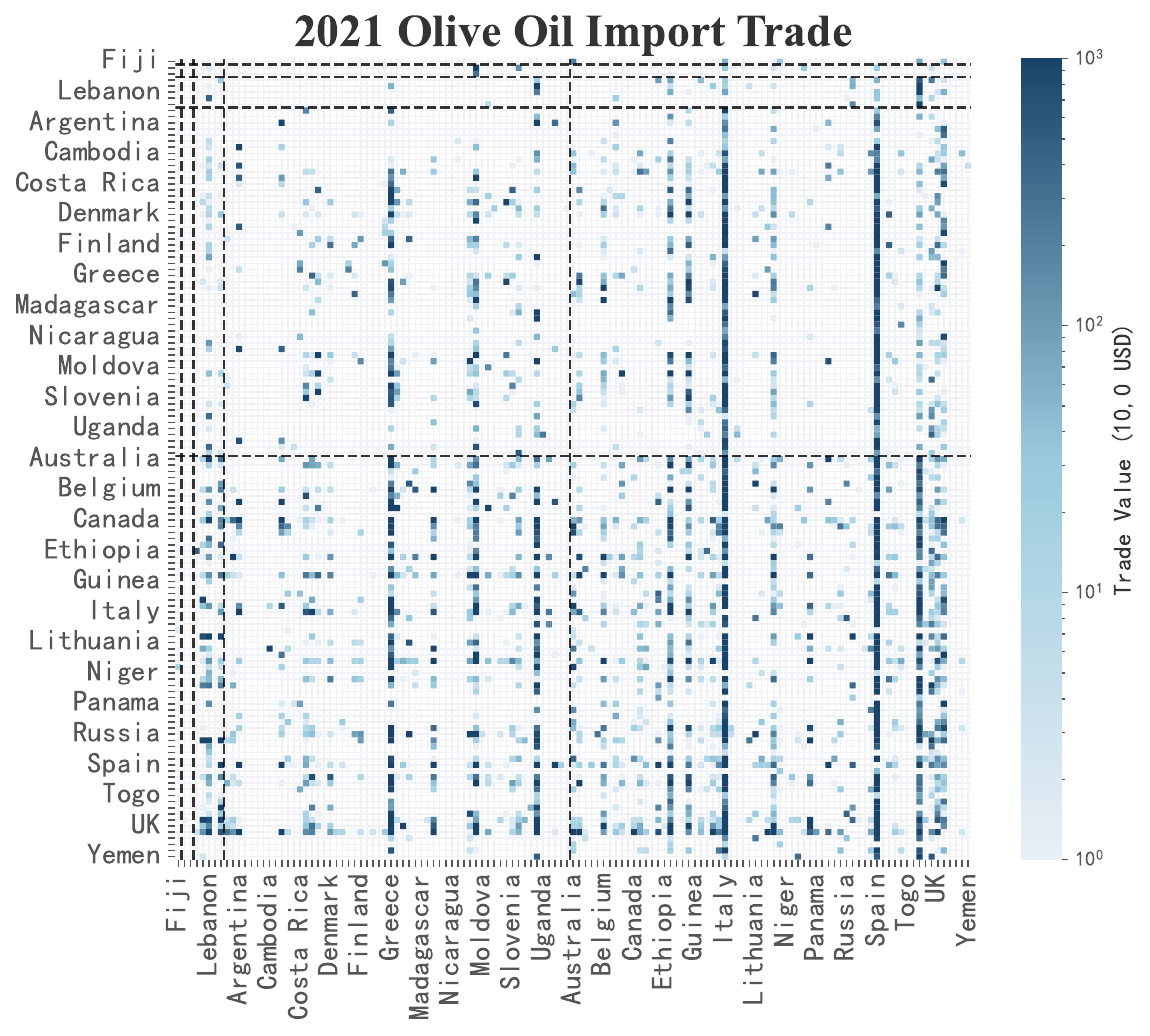}}
\subfigure[2022]{\includegraphics[height=3.5cm,width=4cm,angle=0]{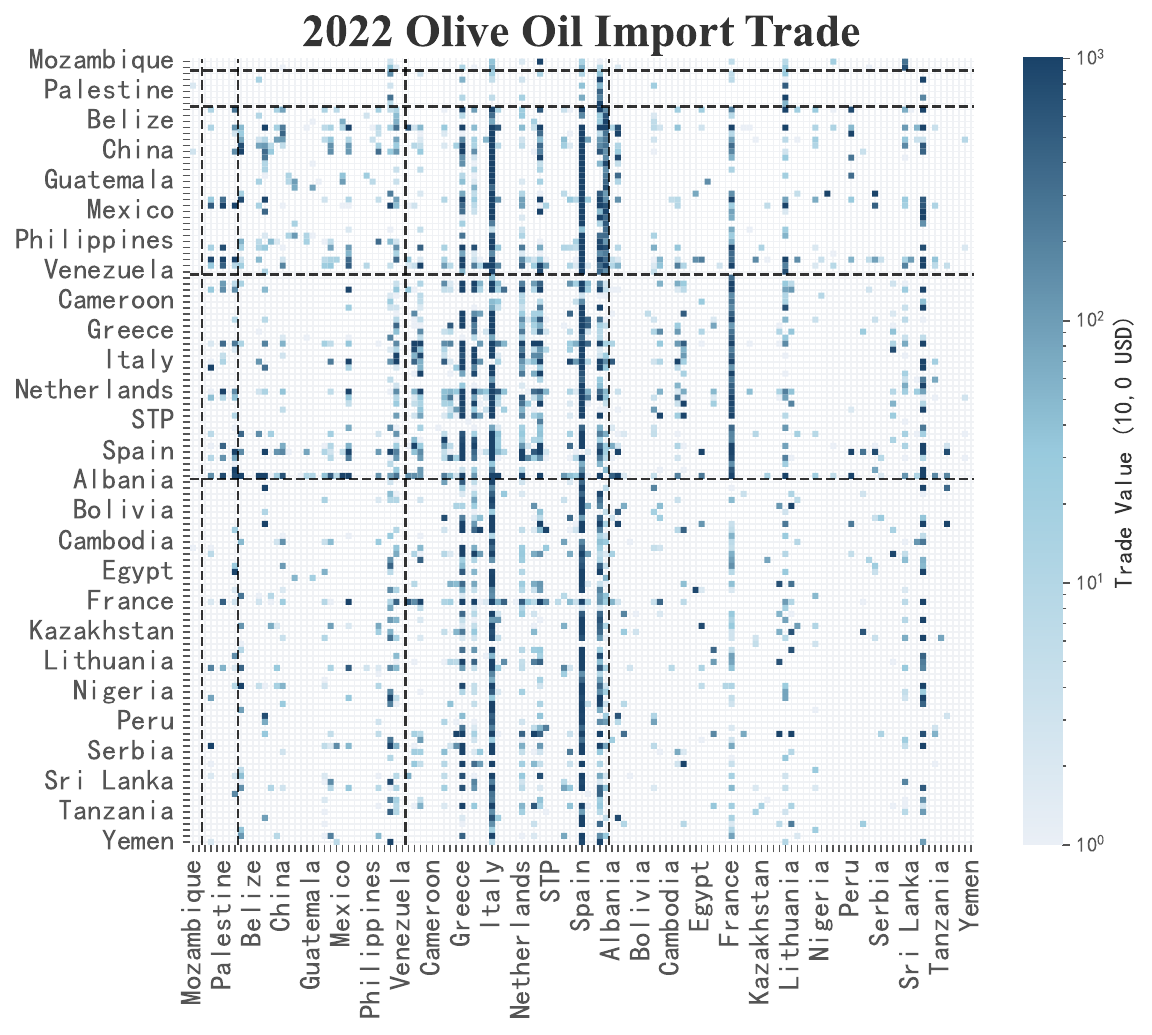}}
\subfigure[2023]{\includegraphics[height=3.5cm,width=4cm,angle=0]{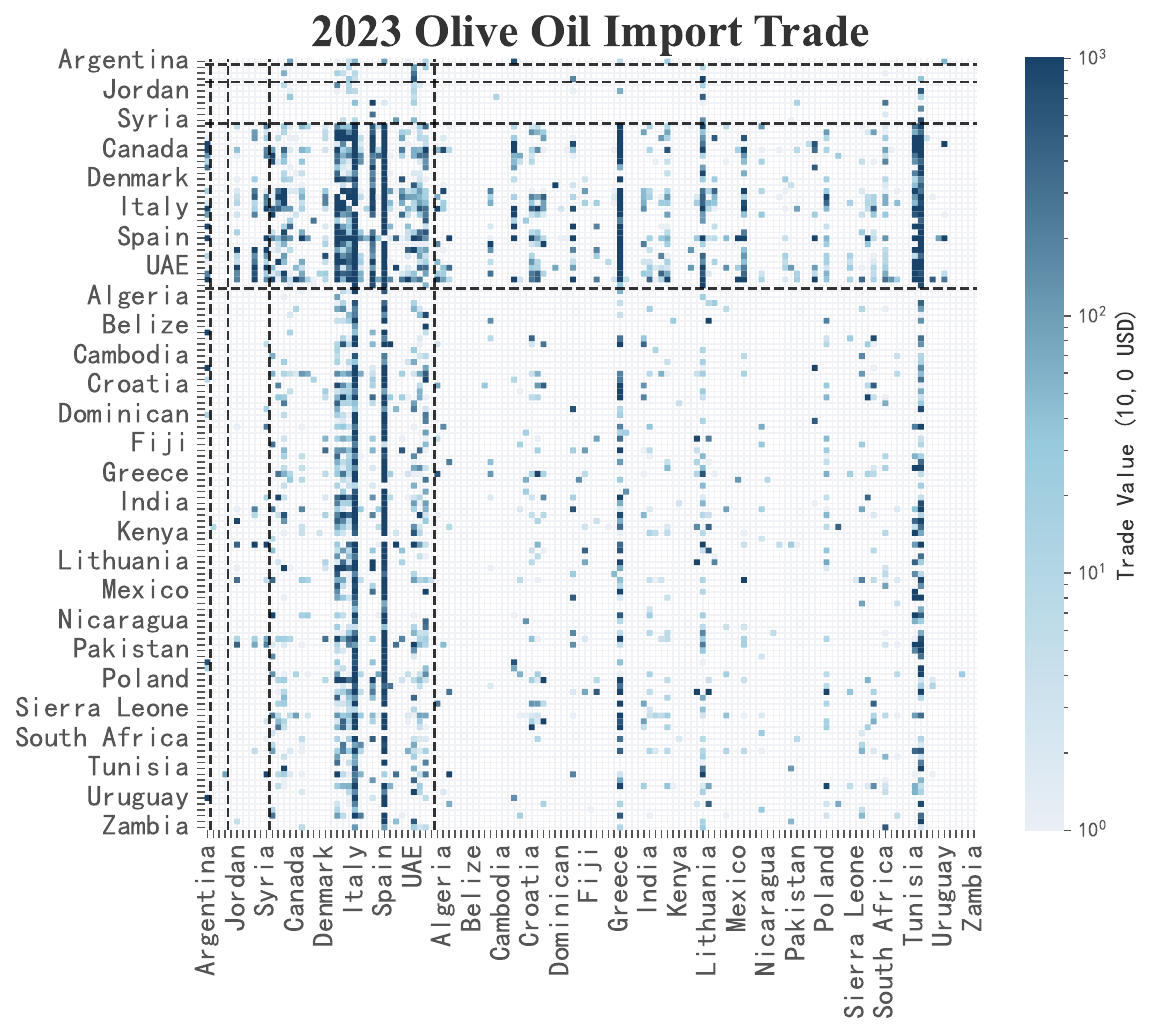}}
\caption{Heatmaps of olive-oil import values from 2019 to 2023. {Each cell represents the import trade value from the country in the row to the country in the column.} Darker colors indicate higher import values, and the dashed lines separate the five detected import communities.}
\label{fig9}
\end{figure}

\textbf{Evolution of export communities.} The community structure of the olive-oil export trade network is shown on the map, where countries in the same community are marked with the same color; see Figure~\ref{fig4}. It is worth noting that the major olive-oil-producing and exporting countries are highly concentrated in the Mediterranean region. Based on Figure~\ref{fig4}, we make the following observations.

First, over the five-year period, major olive-oil-producing countries in the Mediterranean region, such as Spain and Italy, exhibit highly similar export patterns and are generally grouped into the same community. Taking the 2019 community structure as an example, Spain, Italy, and France are assigned to Community 3. By contrast, China and most developing countries in Latin America, Africa, and the Asia-Pacific region are assigned to Community 1, exhibiting export trade patterns distinct from those of the core olive-oil-producing region.

Second, the community structures in 2019 and 2020 are broadly similar and balanced. In 2021, however, except for Russia and several European countries assigned to Community 2, most countries are grouped into Community 1. In 2022, the export community structure of olive oil changes again. Compared with the relatively homogeneous pattern in 2021, the community partition in 2022 becomes clearer, with Community 3 consisting mainly of olive-oil-producing and exporting countries, while countries in Africa and Southeast Asia are more often assigned to Community 5. These changes might be related to the lingering effects of the COVID-19 pandemic.

Third, the export communities of some individual countries change substantially. For example, before 2022, Russia is consistently grouped into the same community as its European neighbours. In 2022, however, its community membership changes. By 2023, Community 2, to which Russia belongs, contains only three countries. This change might also be related to the Russia-Ukraine conflict.

To further illustrate the validity of the detected communities, we compare the olive-oil export patterns of two representative country pairs, Germany-France and the United States-Russia. The results are consistent with their community memberships, and the details are deferred to Appendix~\ref{sec7}.

\begin{figure}[!htbp]{}
\centering
\subfigure[2019]{\includegraphics[height=2.6cm,width=4.9cm,angle=0]{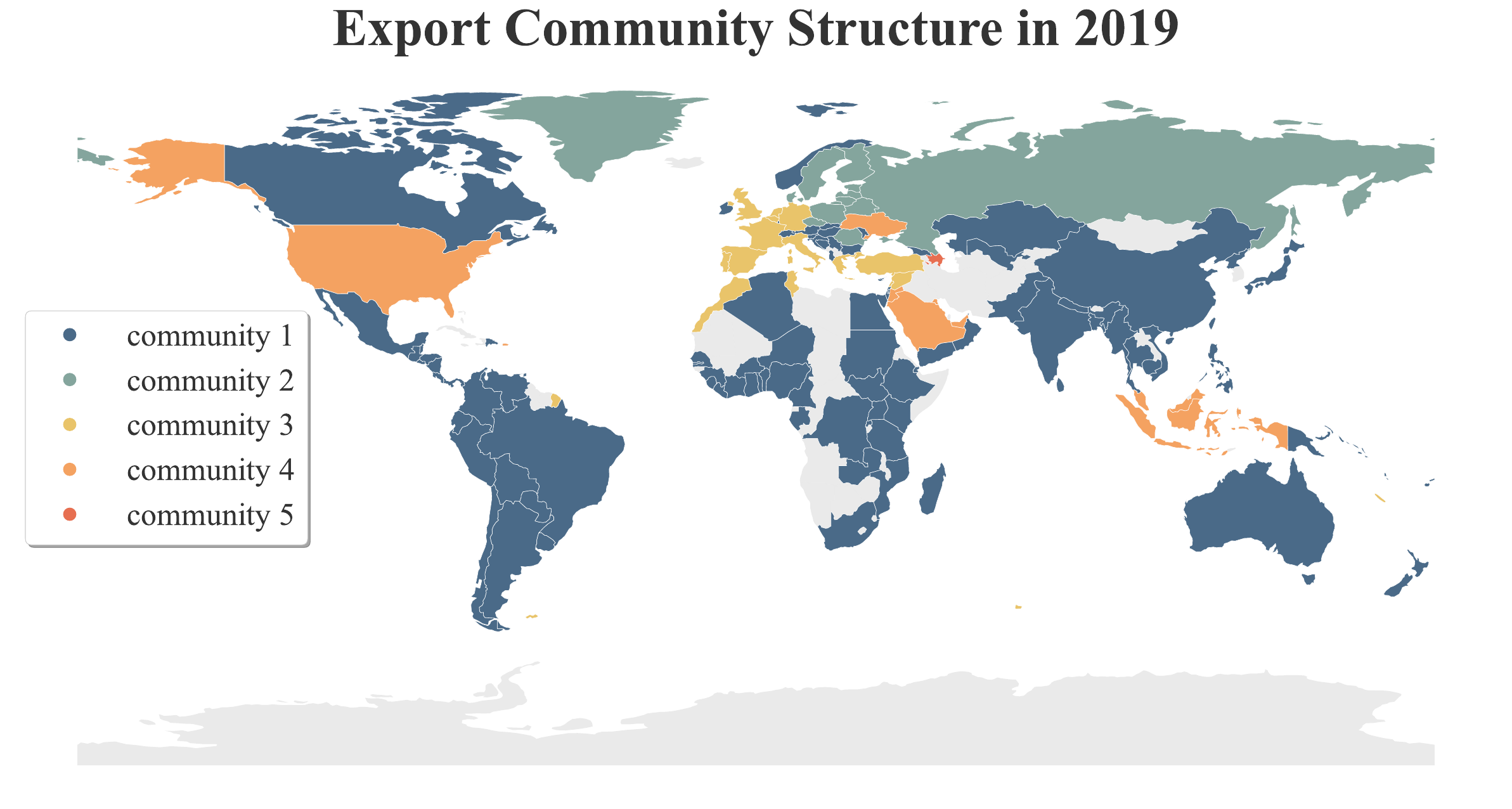}}
\subfigure[2020]{\includegraphics[height=2.6cm,width=4.9cm,angle=0]{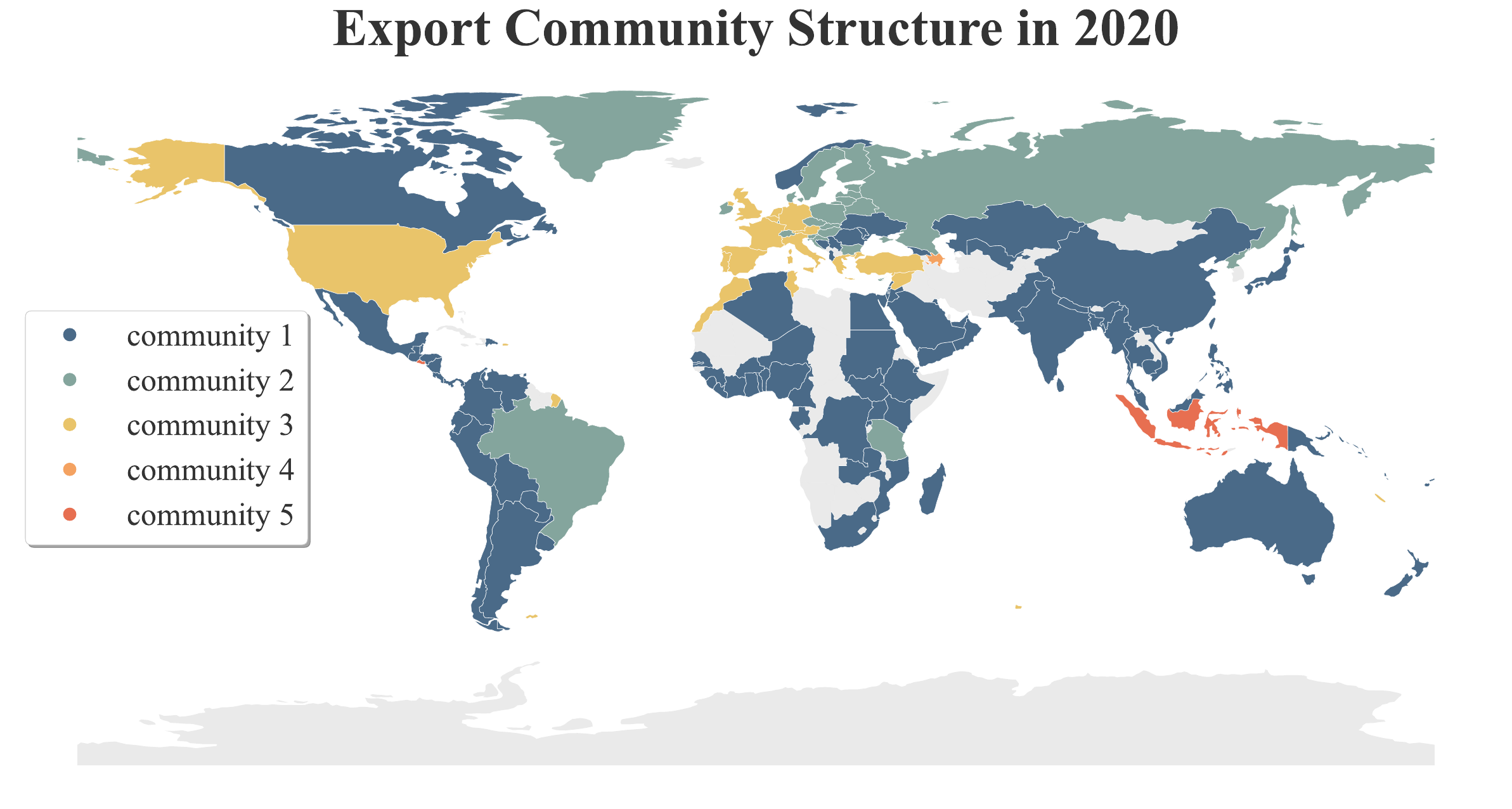}}
\subfigure[2021]{\includegraphics[height=2.6cm,width=4.9cm,angle=0]{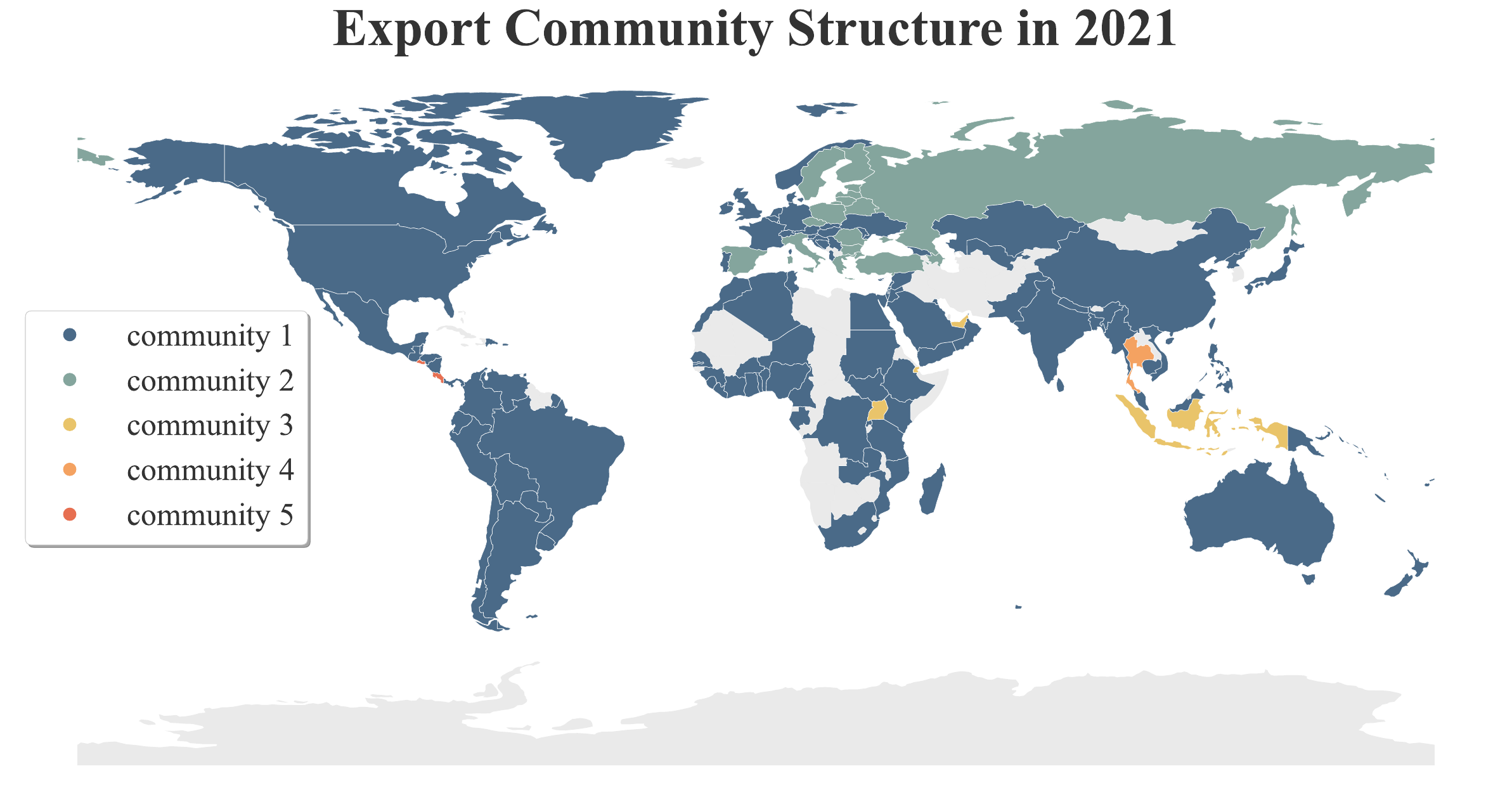}}
\subfigure[2022]{\includegraphics[height=2.6cm,width=4.9cm,angle=0]{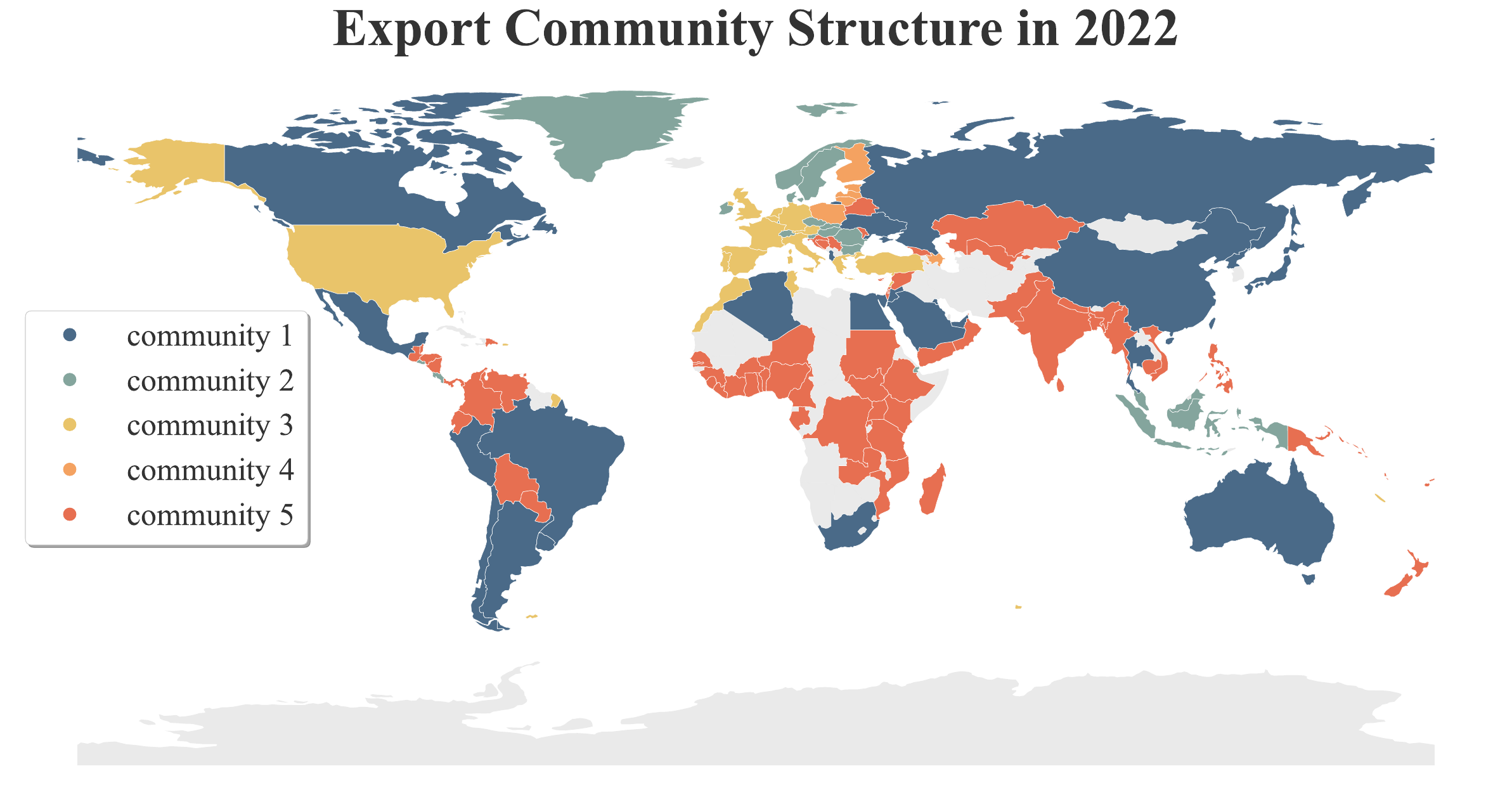}}
\subfigure[2023]{\includegraphics[height=2.6cm,width=4.9cm,angle=0]{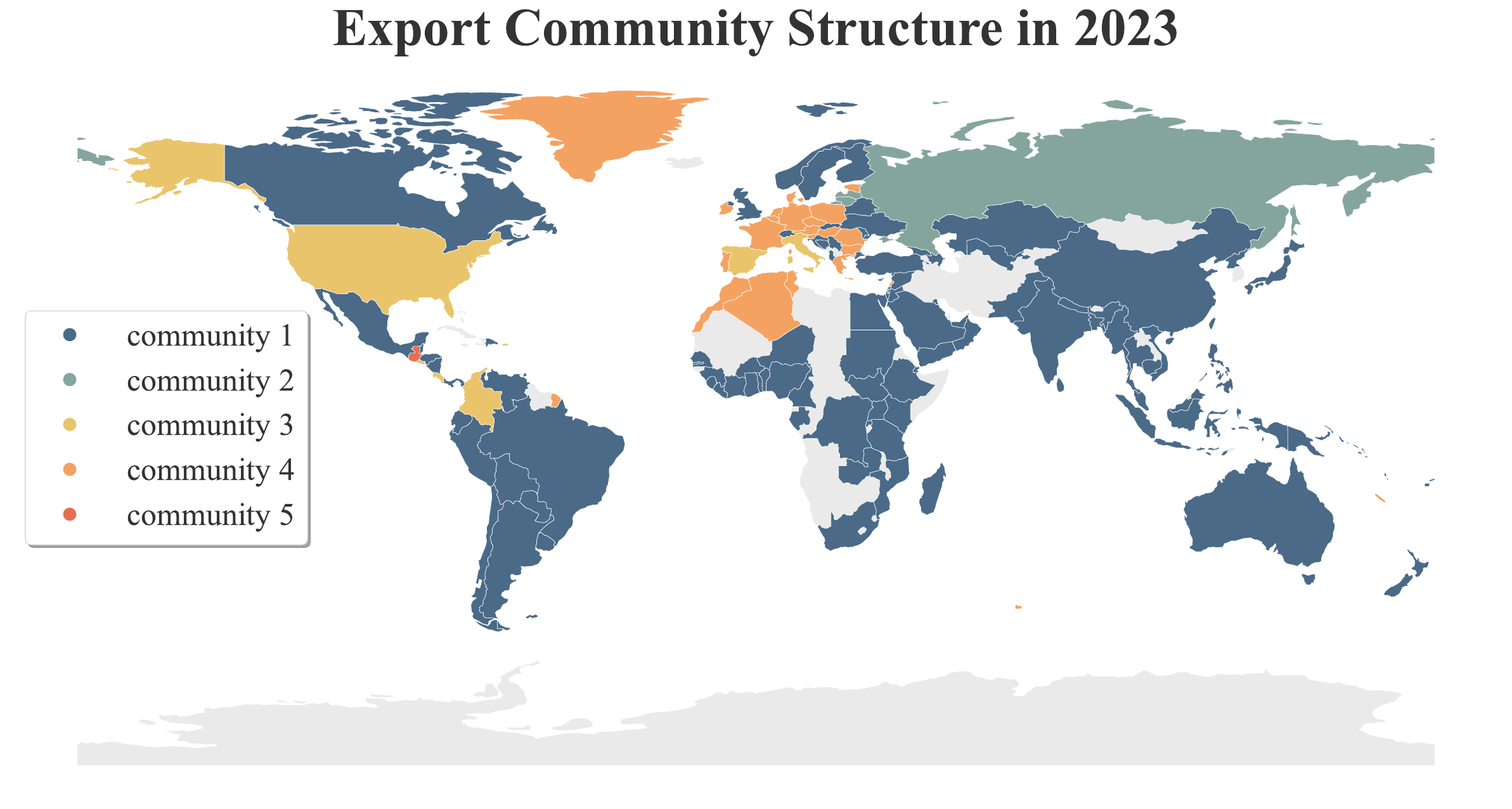}}
\caption{Export community structures of the olive-oil trade network from 2019 to 2023. The countries in the same export communities are marked with the same color.}
\label{fig4}
\end{figure}

\textbf{Evolution of import communities.} 
The community structure of the olive-oil import trade network is shown on the map, where countries in the same community are marked with the same color; see Figure~\ref{fig8}. Compared with the export trade network, the import trade network exhibits a different community structure and pattern of community evolution. We summarize the main findings as follows.

First, from 2019 to 2021, the community structure of the global olive-oil import trade network remains relatively stable. Community 1 includes major olive-oil importing countries such as China, the United States, Germany, and Australia, and forms the largest group of countries with similar import patterns. By contrast, Community 5 mainly comprises developing countries in North Africa, Southeast Asia, Latin America, and parts of the Middle East, forming a smaller and more regional import community. China, the United States, and Russia are also consistently grouped into the same community.

Second, in 2022, the community structure of the olive-oil import trade network undergoes a noticeable adjustment. China, the United States, and Russia are assigned to three different communities: China is grouped together with Asia--Pacific countries such as Japan and Australia, as well as countries in the Americas such as Mexico; the United States is clustered with European countries such as Spain, Italy, and Germany; and Russia is grouped with Middle Eastern countries such as Egypt and Saudi Arabia, together with South American countries such as Brazil.

\begin{figure}[!htbp]{}
\centering
\subfigure[2019]{\includegraphics[height=2.6cm,width=4.9cm,angle=0]{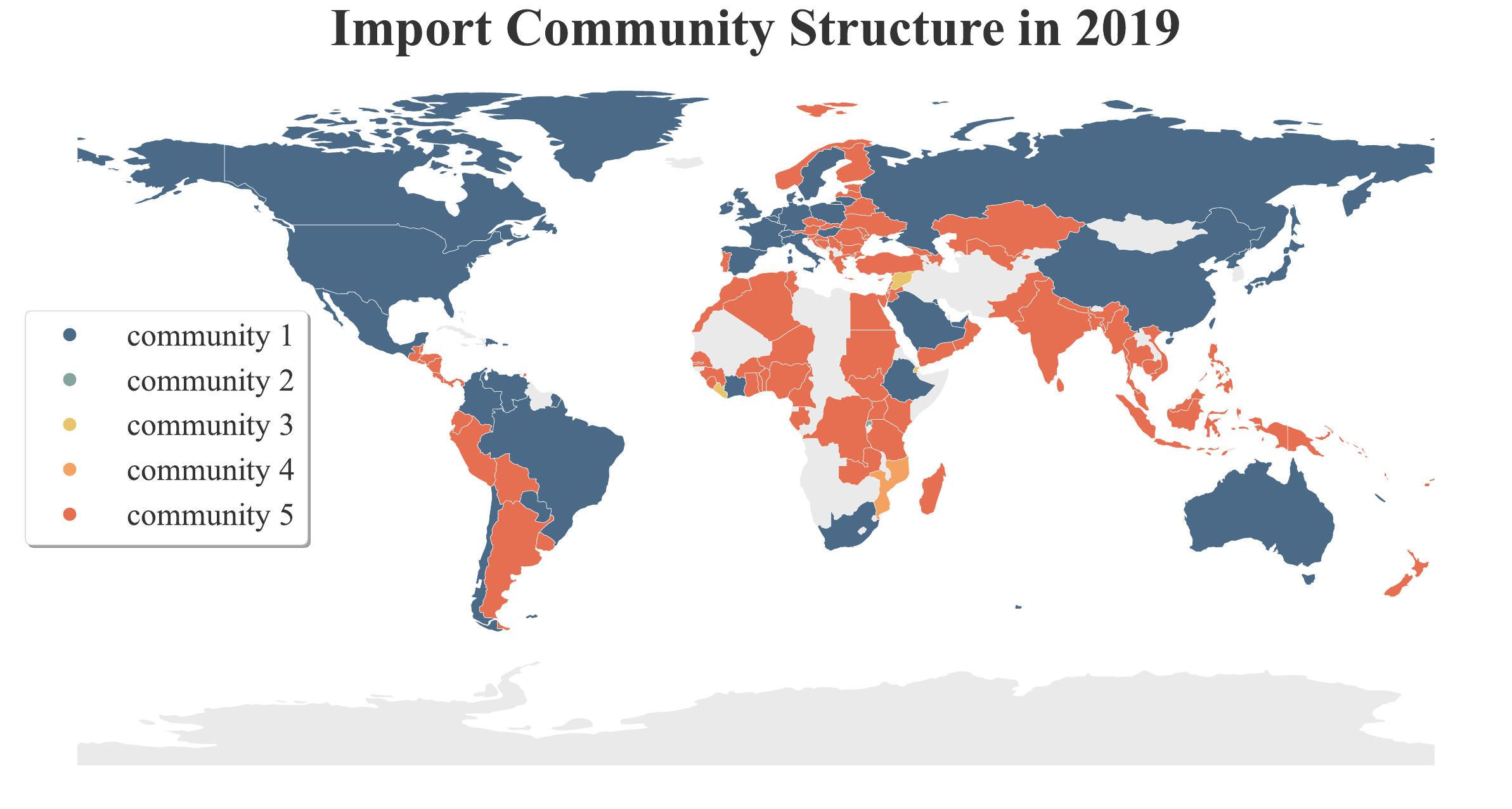}}
\subfigure[2020]{\includegraphics[height=2.6cm,width=4.9cm,angle=0]{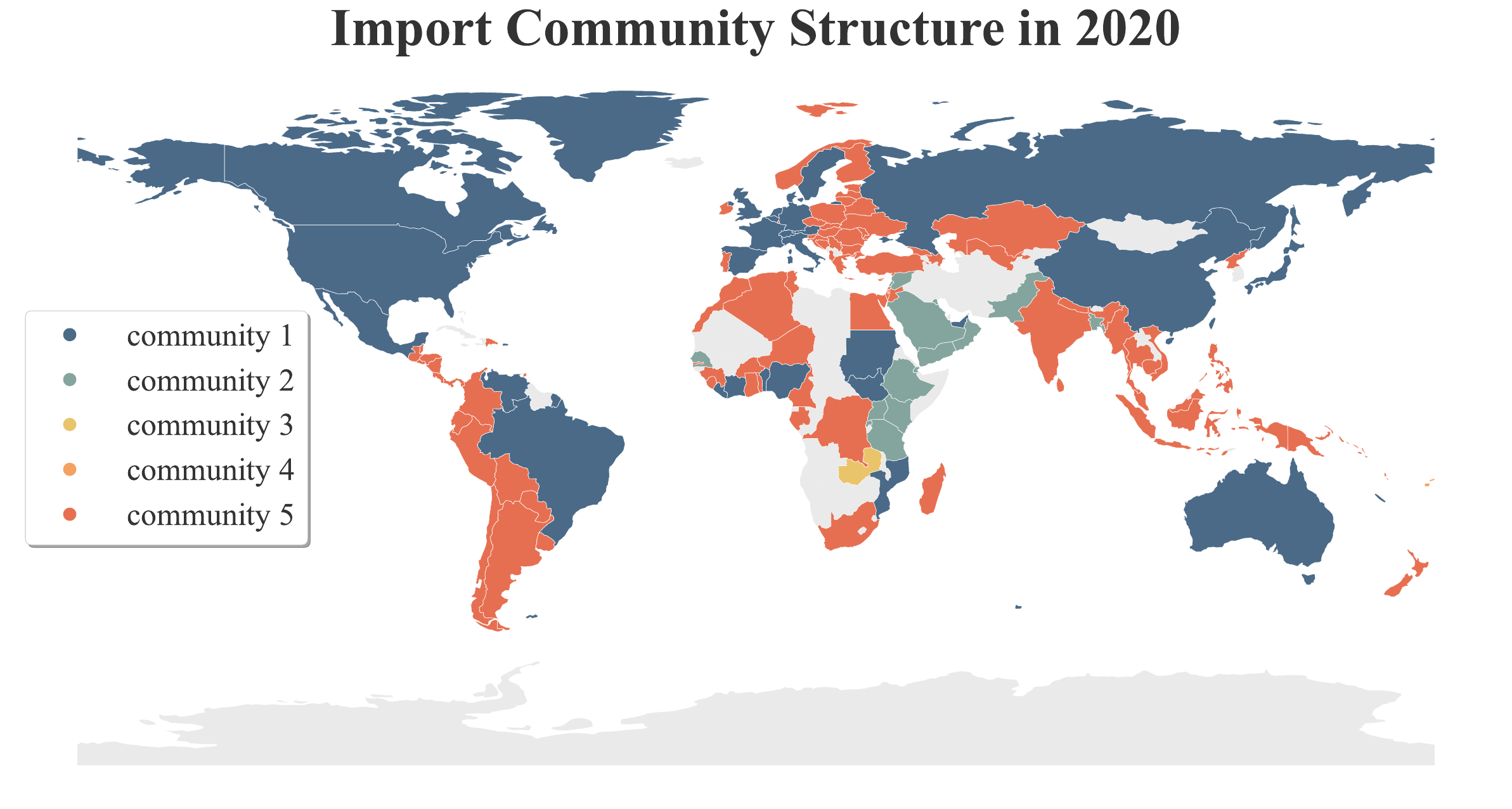}}
\subfigure[2021]{\includegraphics[height=2.6cm,width=4.9cm,angle=0]{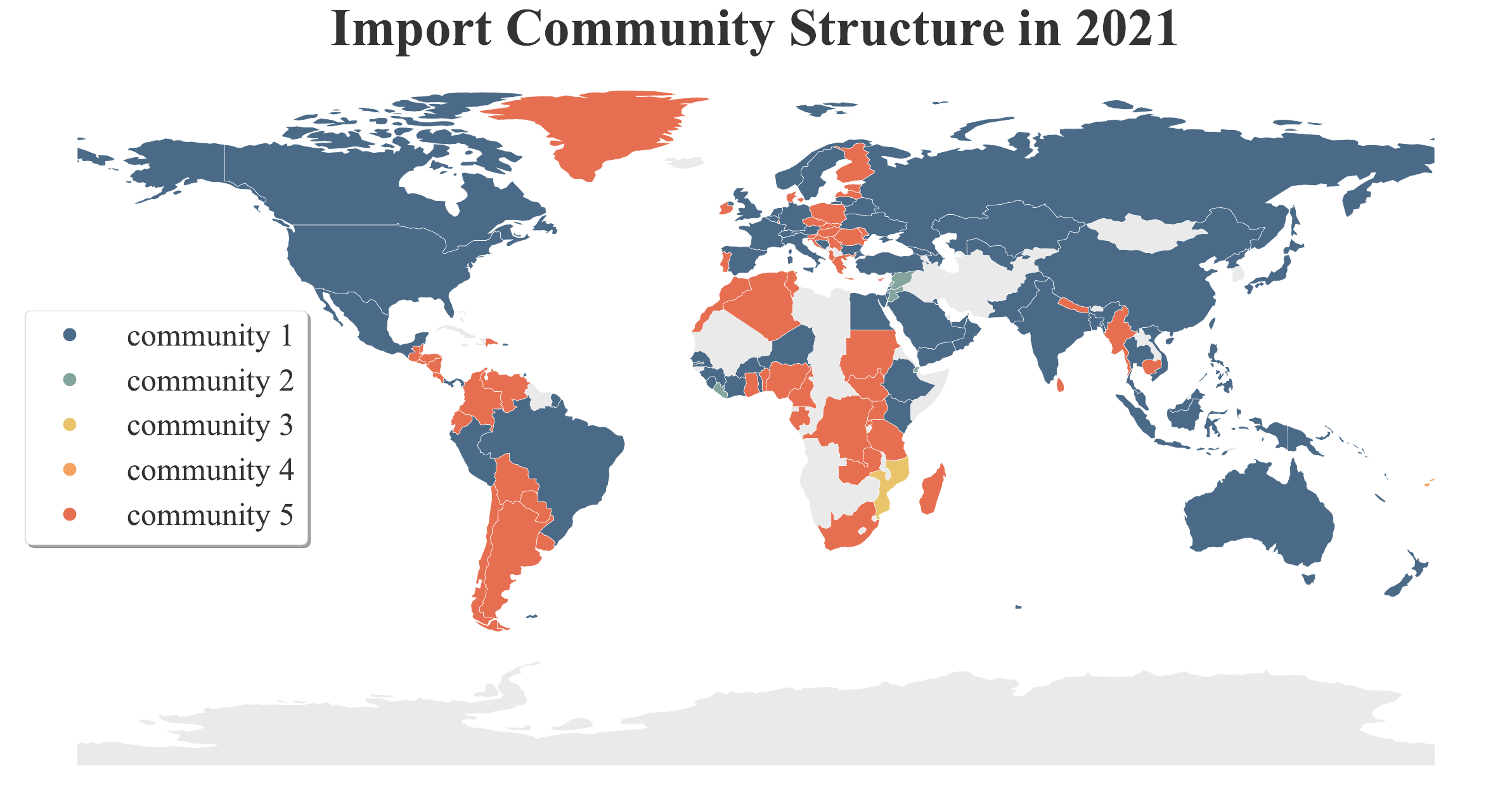}}
\subfigure[2022]{\includegraphics[height=2.6cm,width=4.9cm,angle=0]{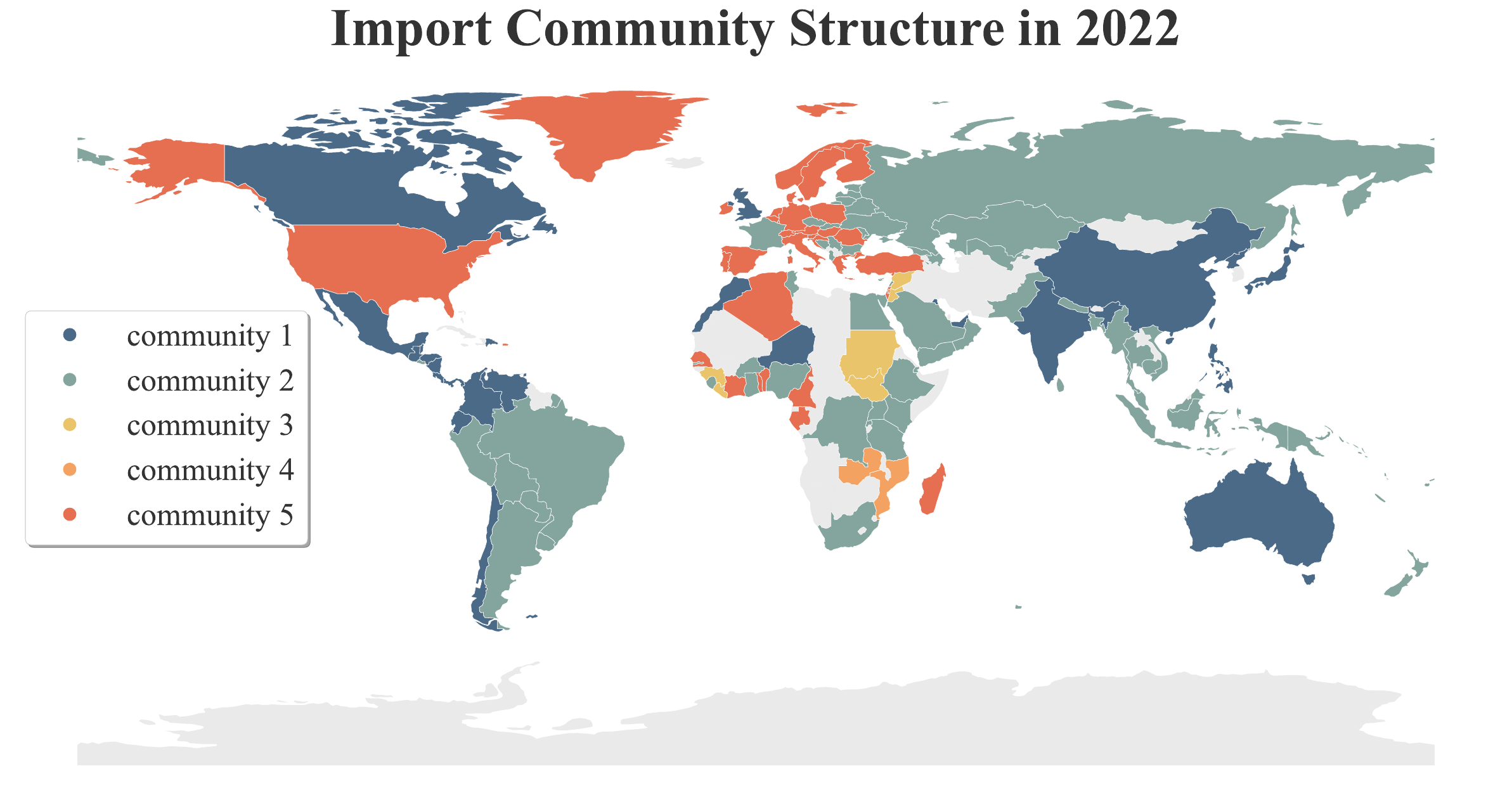}}
\subfigure[2023]{\includegraphics[height=2.6cm,width=4.9cm,angle=0]{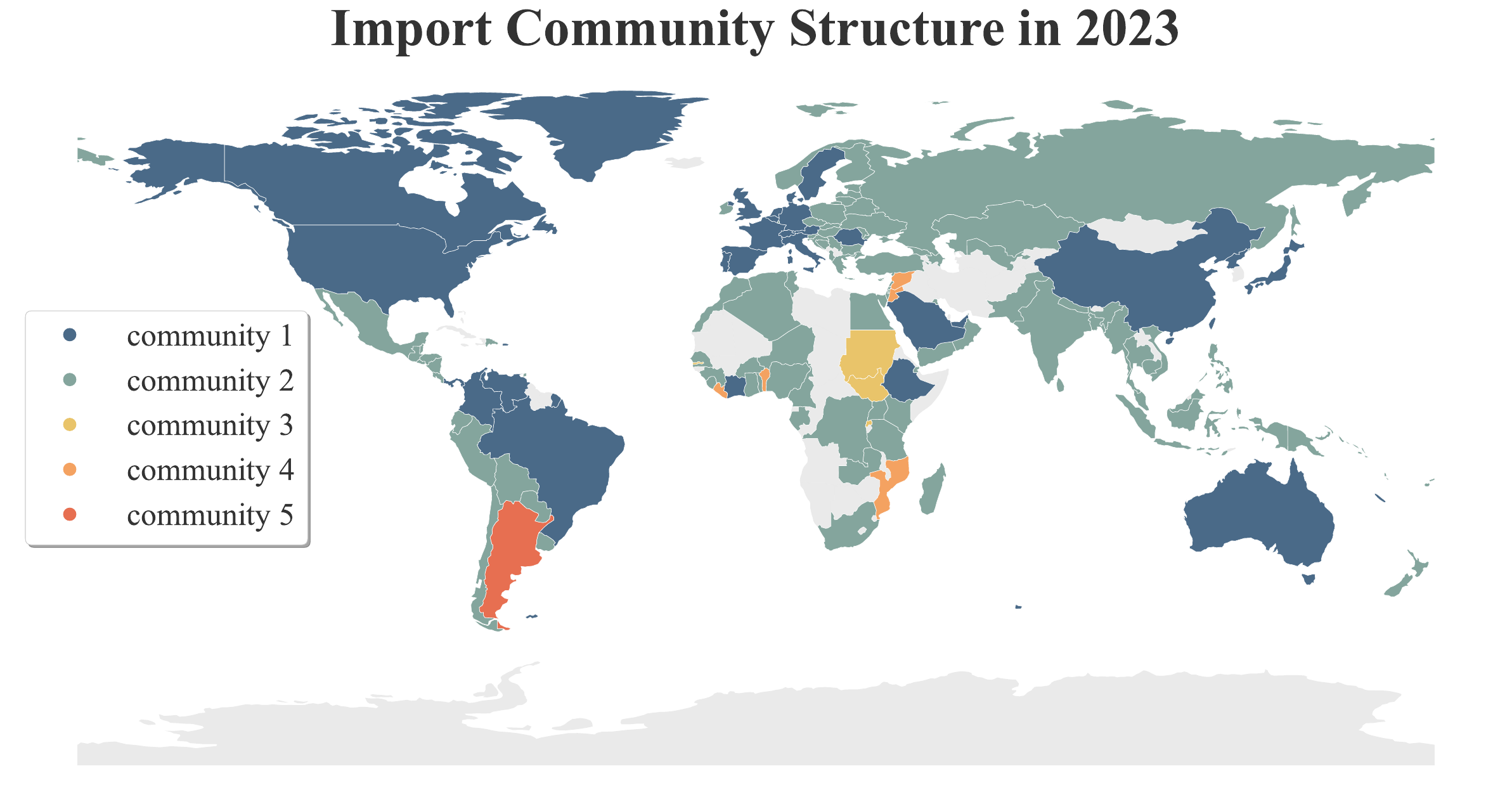}}
\caption{Import community structures of the olive-oil trade network from 2019 to 2023. The countries in the same import communities are marked with the same color.}
\label{fig8}
\end{figure}

\subsection{Comparison with static method}\label{subsec7}
To further evaluate the effectiveness of MuDySC, we compare its export communities for olive oil (shown in Figure~\ref{fig4}) with those obtained by the static method. The static method applies spectral clustering to each static network using only that year's olive-oil trade data, without incorporating inter-layer information from other vegetable-oil products or temporal information across adjacent years. The resulting export community structures under the static method are shown in Figure~\ref{fig10}.

As shown in Figure~\ref{fig10}, the static method clusters most countries into a single large community (Community~1), and therefore fails to reveal meaningful community structure. In contrast, the proposed MuDySC method captures dynamic changes in community structure and yields a more interpretable partition.

\begin{figure}[!htbp]{}
\centering
\subfigure[2021]{\includegraphics[height=2.6cm,width=4.9cm,angle=0]{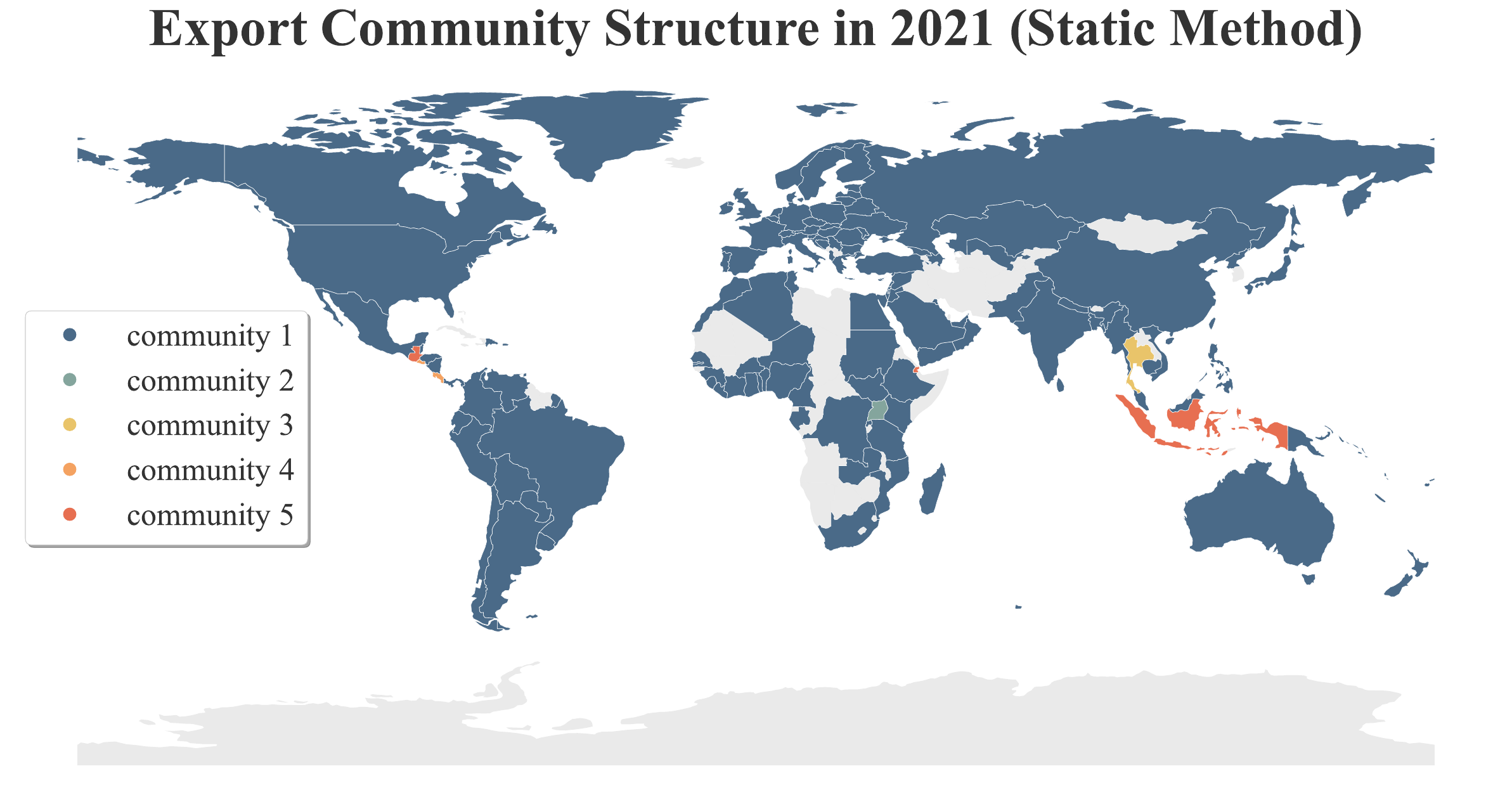}}
\subfigure[2022]{\includegraphics[height=2.6cm,width=4.9cm,angle=0]{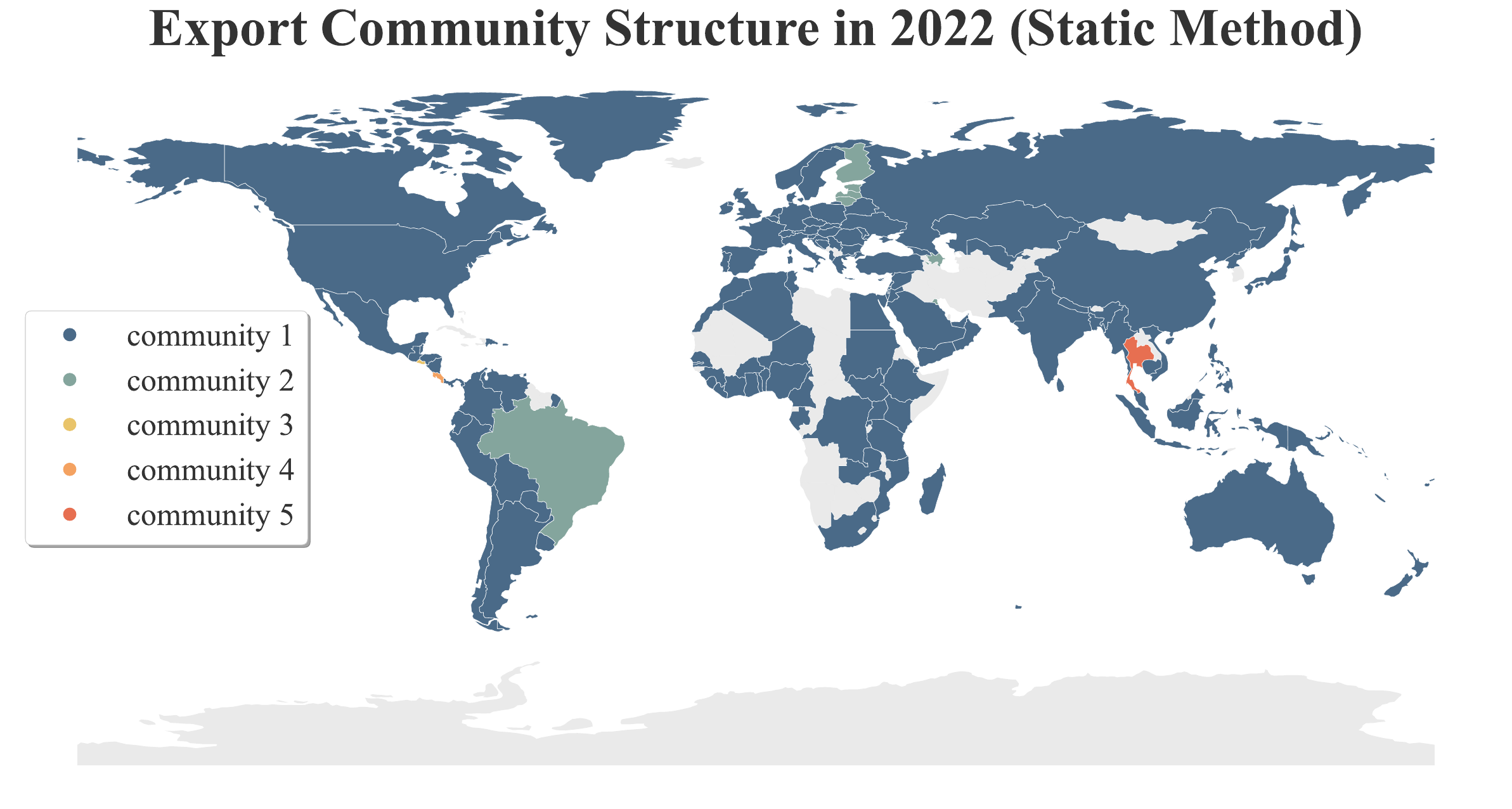}}
\subfigure[2023]{\includegraphics[height=2.6cm,width=4.9cm,angle=0]{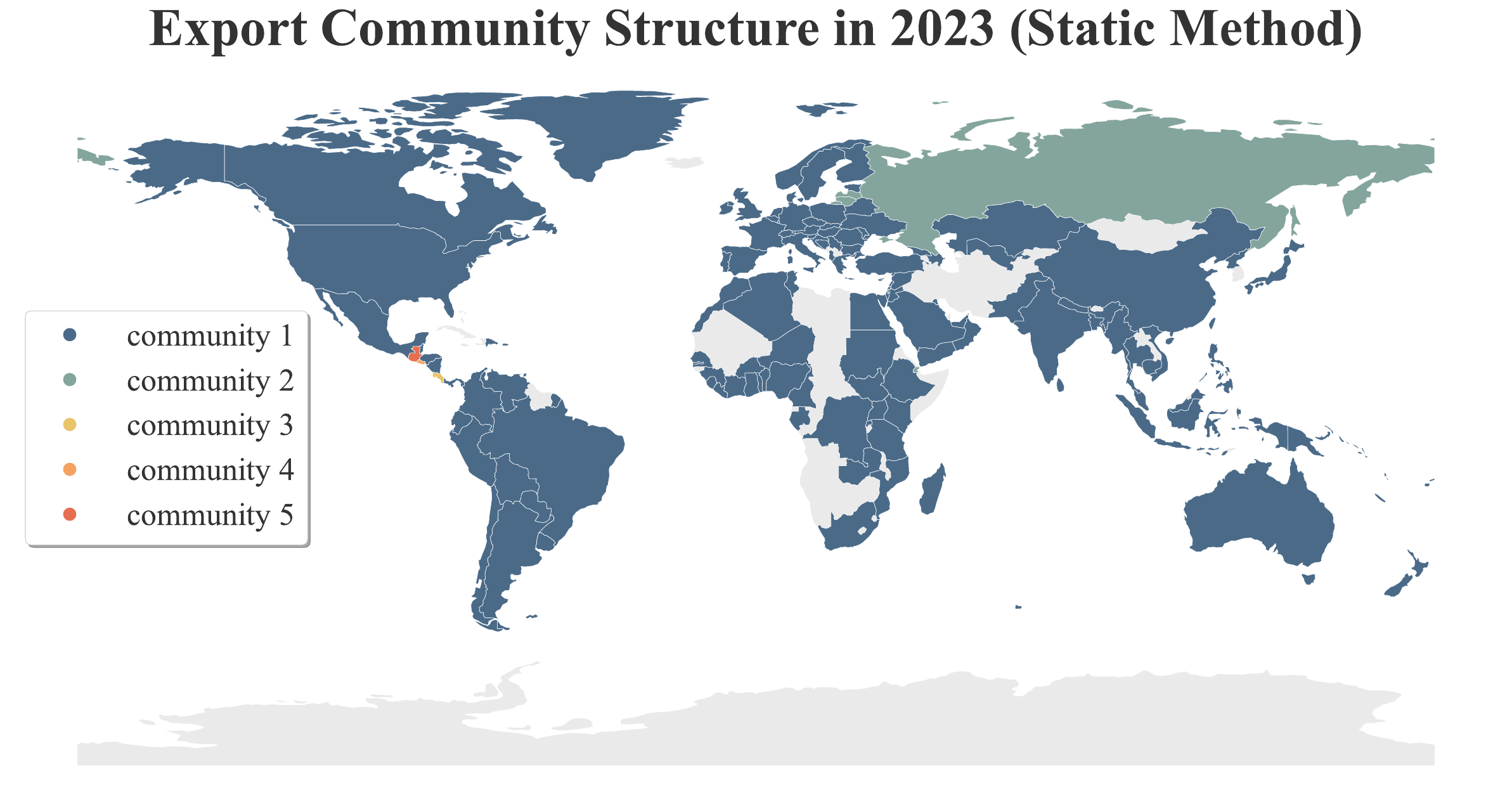}}
\caption{Community structure of olive oil export trade networks under the static method (2021-2023).}
\label{fig10}
\end{figure}

\section{Simulations}\label{sec3}

In this section, we evaluate the effectiveness of the proposed MuDySC through numerical experiments. In particular, we study the effects of the number of time points \(T\) and the number of layers \(M\) on the clustering performance of MuDySC and the competing methods.

The multilayer-dynamic network is generated as follows, where each static network follows the well-known stochastic block model. Specifically, $[A_{m,t}]_{i,j}$, i.e., the $(i,j)$th entry of $A_{m,t}$ is generated according to
\begin{equation*}
[A_{m,t}]_{i,j} \sim \text{Bernoulli}(B_{[z_{m,t}]_i,[z_{m,t}]_j}), \quad [A_{m,t}]_{j,i}=\quad [A_{m,t}]_{i,j}; \quad j>i \quad i,j \in \{1,...,n\};  
\end{equation*}
where $B \in [0,1]^{K \times K}$ denotes the community connectivity matrix and for simplicity, we assume that \(B\) has off-diagonal entries equal to \(0.1\) and diagonal entries equal to \(0.4\); \(z_{m,t}\in\{1,\ldots,K\}^n\) is the vector of community labels for the \(n\) nodes in the network with layer \(m\) and time \(t\), and $[z_{m,t}]_i$ denotes the $i$th entry of $z_{m,t}$. We assume that the initial community label vector \(z_{m,1}\) is the same across all layers and is balanced, in the sense that each community contains \(n/K\) nodes. To incorporate community variation, we let the community labels in layer \(m\) evolve independently over time according to

\begin{equation*}
z_{m,t+1} =
\begin{cases} 
z_{m,t} & \text{with probability } 1-r, \\
\text{Multinomial}\left( \frac{1}{K}, \dots, \frac{1}{K} \right) & \text{otherwise.}
\end{cases}
\end{equation*}
where $r$ denotes the probability that a node changes its community membership at time $t+1$.

We compare the performance of the proposed MuDySC with the following three methods, which only used partial information of the multilayer-dynamic network.  

\begin{itemize}
    \item \textbf{Static \citep{rohe2011spectral}:} The method that applies the spectral clustering on each static network, respectively. 
    \item \textbf{PisCES \citep{liu2018global}:} The counterpart of MuDySC that performs eigenvector smoothing only within each dynamic network, without incorporating layer-wise information.
    \item \textbf{StaMuSC:} The counterpart of MuDySC that performs eigenvector smoothing only within the multilayer network at each time point, without incorporating temporal information.
\end{itemize}

We conduct the following Experiments 1 and 2 to assess the clustering performance of the proposed method MuDySC and three benchmark methods, one examines the effect of the number of time points \(T\), and the other investigates the effect of the number of layers \(M\). The clustering performance is measured by the average misclassification rate (i.e., the proportion of misclassified nodes) over each individual network. For each experiment, we consider two parameter settings:
\begin{itemize}
    \item Case I: The number of nodes $n=50$ and the number of communities $K=2$.
    \item Case II: The number of nodes $n=100$ and the number of communities $K=3$.
\end{itemize}
For both settings, we consider different probabilities \(r \in \{0, 0.1, 0.2\}\) that a community label of a node changes from the previous time point. 

\begin{figure}[!htbp]{}
\centering
\subfigure[Case I: $n=50, K=2$]
{\includegraphics[height=4cm,width=14cm,angle=0]{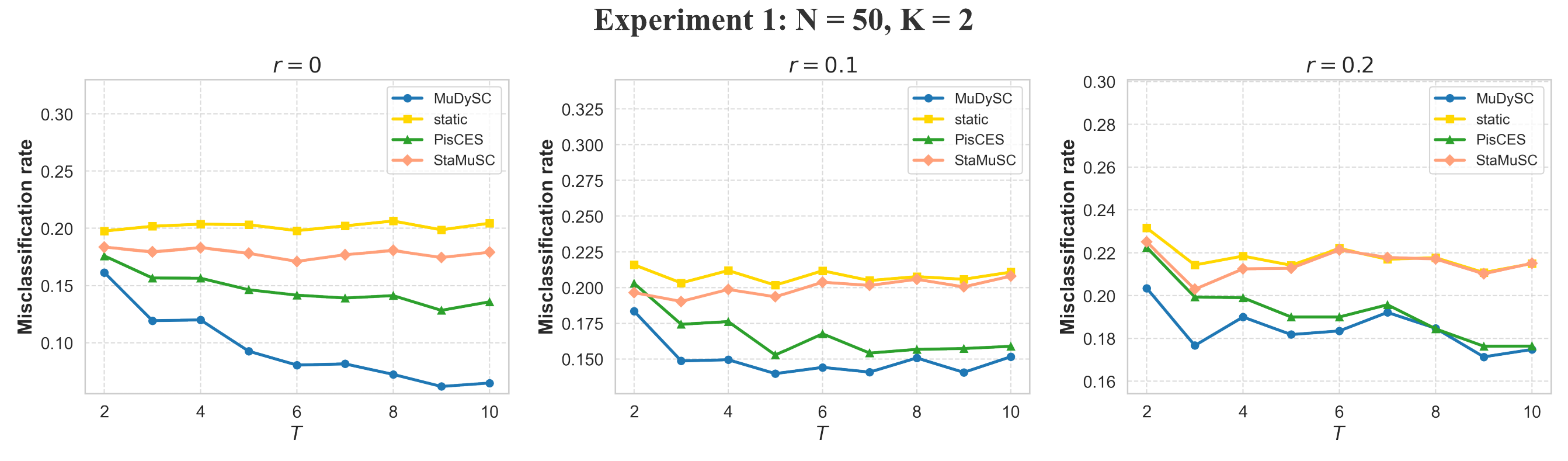}}
\subfigure[Case II: $n=100, K=3$]
{\includegraphics[height=4cm,width=14cm,angle=0]{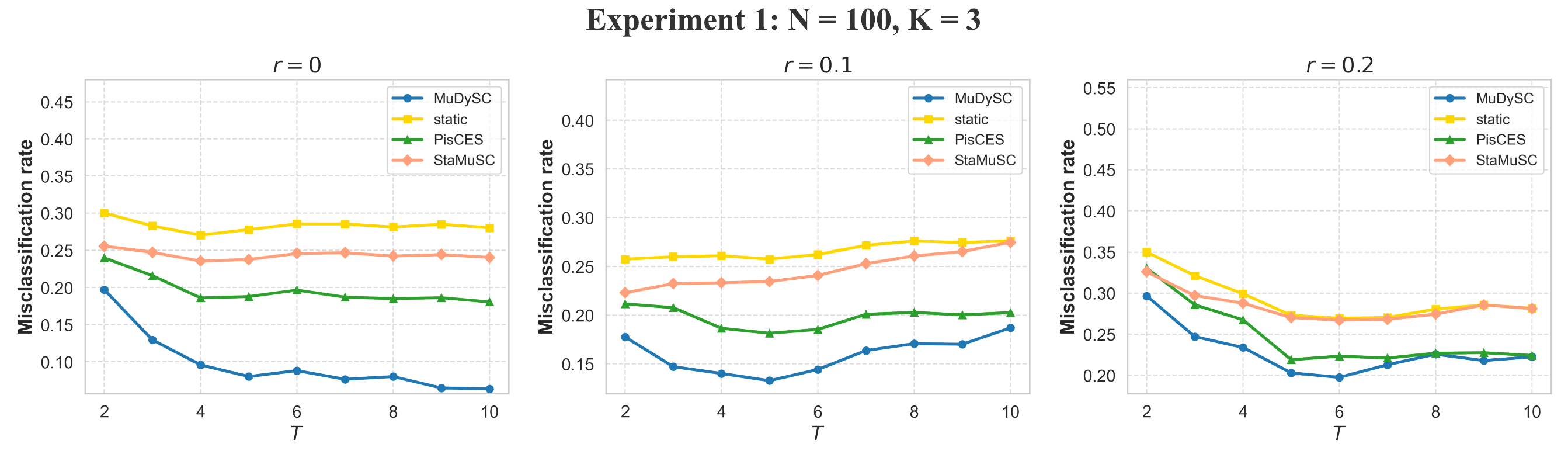}}
\caption{The average misclassification rates of four methods as $T$ increases under Cases I and II. For each case, we consider different probabilities \(r \in \{0, 0.1, 0.2\}\).}
\label{fig1}
\end{figure}

\textbf{Experiment 1: Effect of the number of time points $T$.} In this experiment, we fix number of layers $M=2$, and study how the performance of four methods varies with $T$.  
The averaged misclassification rates over 10 replications under Cases I and II are shown in Figure \ref{fig1}. We observe that the proposed method MuDySC achieves the lowest misclassification rate in all cases, demonstrating its superior clustering performance. The methods that incorporate temporal information, MuDySC and PisCES, initially benefit from an increasing number of time points, leading to a decrease in misclassification rate; see Figure \ref{fig1}(a) and Figure \ref{fig1}(b) with $r=0$. However, the misclassification rate may increase again once \(T\) becomes moderately large; see Figure \ref{fig1}(b) with $r=0.1$ and $r=0.2$. This is because the temporal smoothness of the projection matrices may accumulate community heterogeneity from neighboring networks, which can slightly increase the error rate. In contrast with MuDySC and PisCES, StaMuSC and static  maintain nearly constant error as $T$ increases, as they do not incorporate the temporal information.  
The static method, which relies solely on individual network information, performs the worst with the highest misclassification rate. Furthermore, we observe that as $r$ increases, the performance of all methods deteriorates to varying degrees, but MuDySC consistently achieves the best results.

\begin{figure}[!htbp]{}
\centering
\subfigure[Case I: $n=50, K=2$]
{\includegraphics[height=4cm,width=14cm,angle=0]{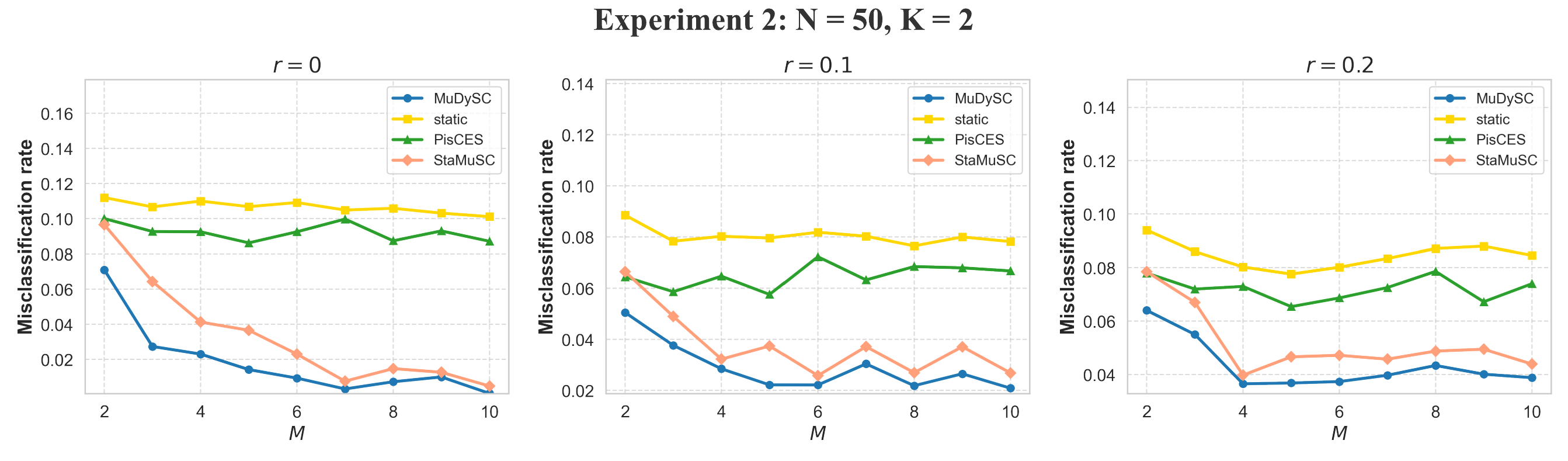}}
\subfigure[Case II: $n=100, K=3$]
{\includegraphics[height=4cm,width=14cm,angle=0]{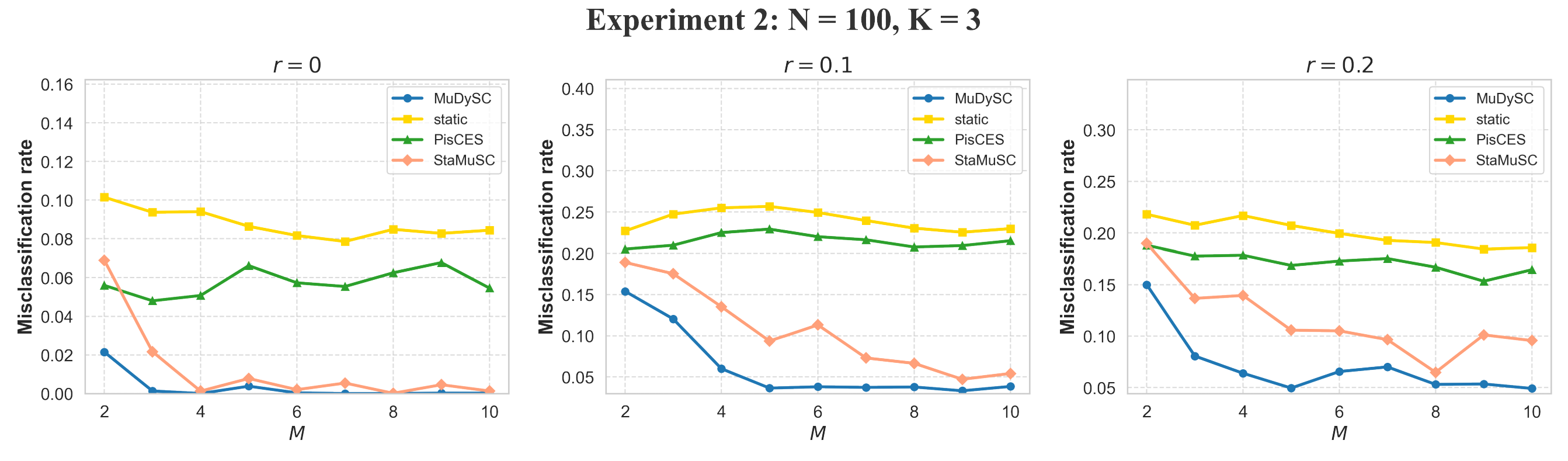}}
\caption{The average misclassification rates of four methods as $M$ increases under Cases I and II. For each case, we consider different probabilities \(r \in \{0, 0.1, 0.2\}\).}
\label{fig2}
\end{figure}

\textbf{Experiment 2: Effect of the number of layers $M$.} In this experiment, we fix the number of time points $T=2$, and we examine how the performance of four methods varies with $M$.  
The averaged misclassification rates over 10 replications under Cases I and II are shown in Figure \ref{fig2}. 
We observe that, as the number of layers increases, the misclassification error of both MuDySC and StaMuSC decrease, indicating that incorporating layer-wise information improves community detection performance. 
Among all methods, MuDySC achieves the lowest misclassification rate, followed by StaMuSC, while PisCES and the static show the worst performance.  
In addition, when $r$ becomes large, the performance of all methods declines, yet MuDySC remains the most robust.

\section{Conclusion}\label{sec5}

In this paper, we studied how community patterns in international trade evolve across products and over time through community detection in multilayer-dynamic networks. To address this problem, we proposed a novel method, MuDySC, which obtains smoothed versions of the eigenvector projection matrices for individual networks by enforcing smoothness across adjacent time points and across different layers at the same time point. We also provided the iterative alternating algorithm to solve the resulting optimization problem. Under mild conditions, we showed that the iterative alternating algorithm converges to the global optimum. The simulation experiments demonstrated the advantage of MuDySC over competing methods that use only partial information from the multilayer-dynamic network. In addition, we applied the proposed MuDySC to the FAO dataset and obtained interpretable and meaningful results on the evolution of communities in the olive-oil trade network, providing insights into the evolutionary patterns of international trade. The proposed method can also be extended to accommodate directed or weighted networks.




\renewcommand*{\thetheorem}{A\arabic{theorem}}
\renewcommand*{\thelemma}{A\arabic{lemma}}
\renewcommand*{\theproposition}{A\arabic{proposition}}
\renewcommand*{\thedefinition}{A\arabic{definition}}
\renewcommand*{\theequation}{A\arabic{equation}}
\renewcommand*{\thefigure}{A\arabic{figure}}
\renewcommand*{\thealgorithm}{A\arabic{algorithm}}
\renewcommand*{\theremark}{A\arabic{remark}}

\appendix
\section*{Appendix}
In the Appendix, we provide the additional results for the FAO data analysis in Appendix \ref{sec7} and the proof of Theorem \ref{thm1} in Appendix \ref{sec8}.

\section{Additional results for the FAO data analysis}\label{sec7}

To further illustrate the validity of the detected communities, we analyze two pairs of countries in terms of their export patterns: Germany and France, and the United States and Russia. 

Note that the community structure shown in Figure~\ref{fig4} indicates that Germany and France are assigned to the same export community throughout the five-year period. Figures~\ref{fig5} and \ref{fig6} present Sankey diagrams of their export destinations and export values. The results show that, among the top 15 export destinations of Germany and France, 10 countries overlap, most of which are EU member states such as Austria and the Netherlands. This pattern may be explained by the strong economic integration within the European Union, which likely promotes agricultural trade among member states.

By contrast, Russia and the United States are never assigned to the same export community during the five-year period. The results show that the two countries differ substantially in their olive-oil export patterns. First, the export value of olive oil from the United States is more than 30 times that of Russia. Second, among their respective top 15 export destinations, only two countries overlap. Exports from the United States are concentrated mainly in nearby markets such as North America and the Caribbean, whereas Russian exports are dispersed across Central Asia, Eastern Europe, and East Asia, with a relatively small overall trade volume.

\begin{figure}[!htbp]
\centering
\includegraphics[height=6cm,width=12cm,angle=0]{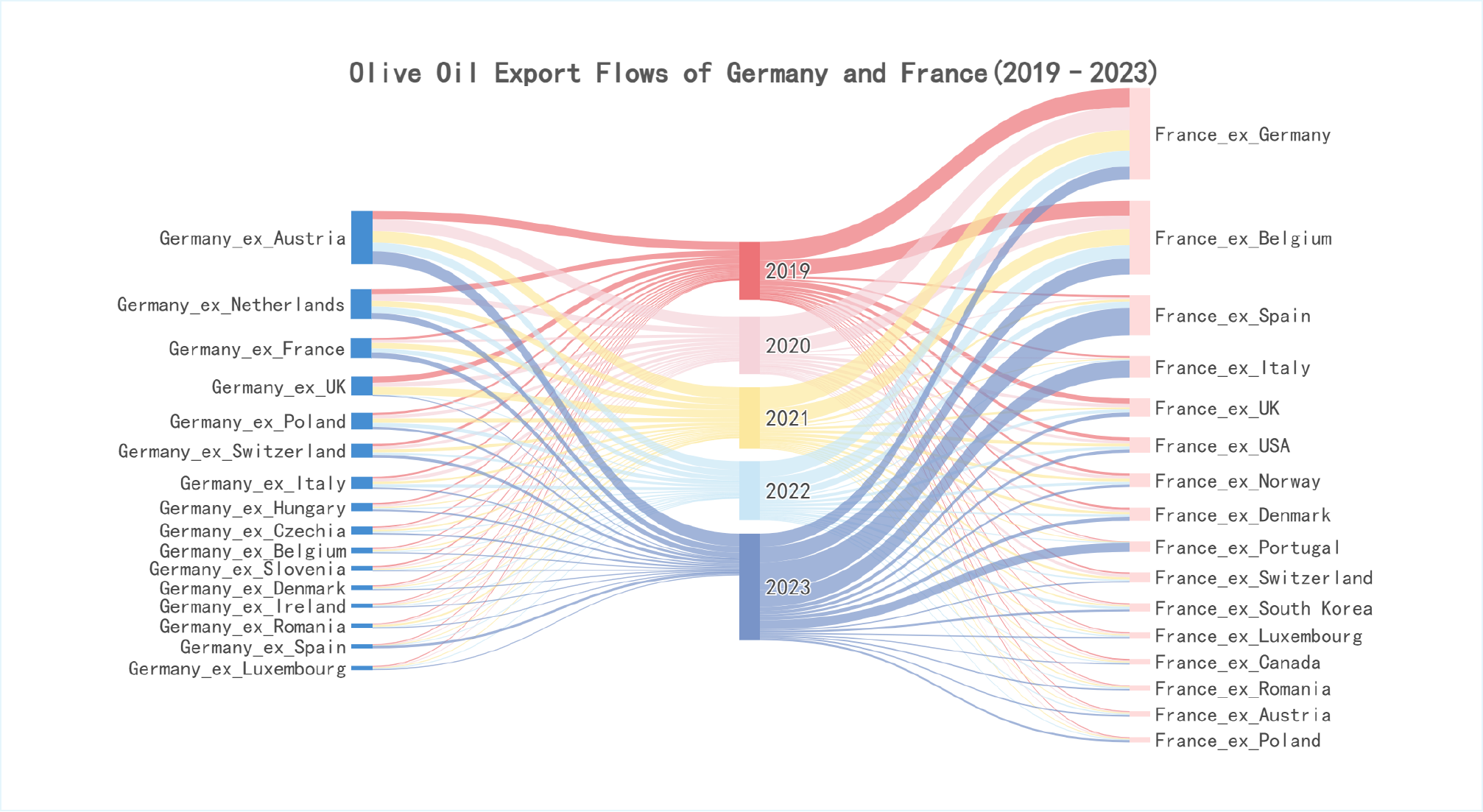}
\caption{Sankey diagram of the top 15 olive-oil export destinations of Germany and France. The labels on the left and right correspond to the top 15 export destinations of Germany and France over the five-year period, respectively, and the width of each flow is proportional to the export value.}
\label{fig5}
\end{figure}

\begin{figure}[!htbp]
\centering
\includegraphics[height=6cm,width=12cm,angle=0]{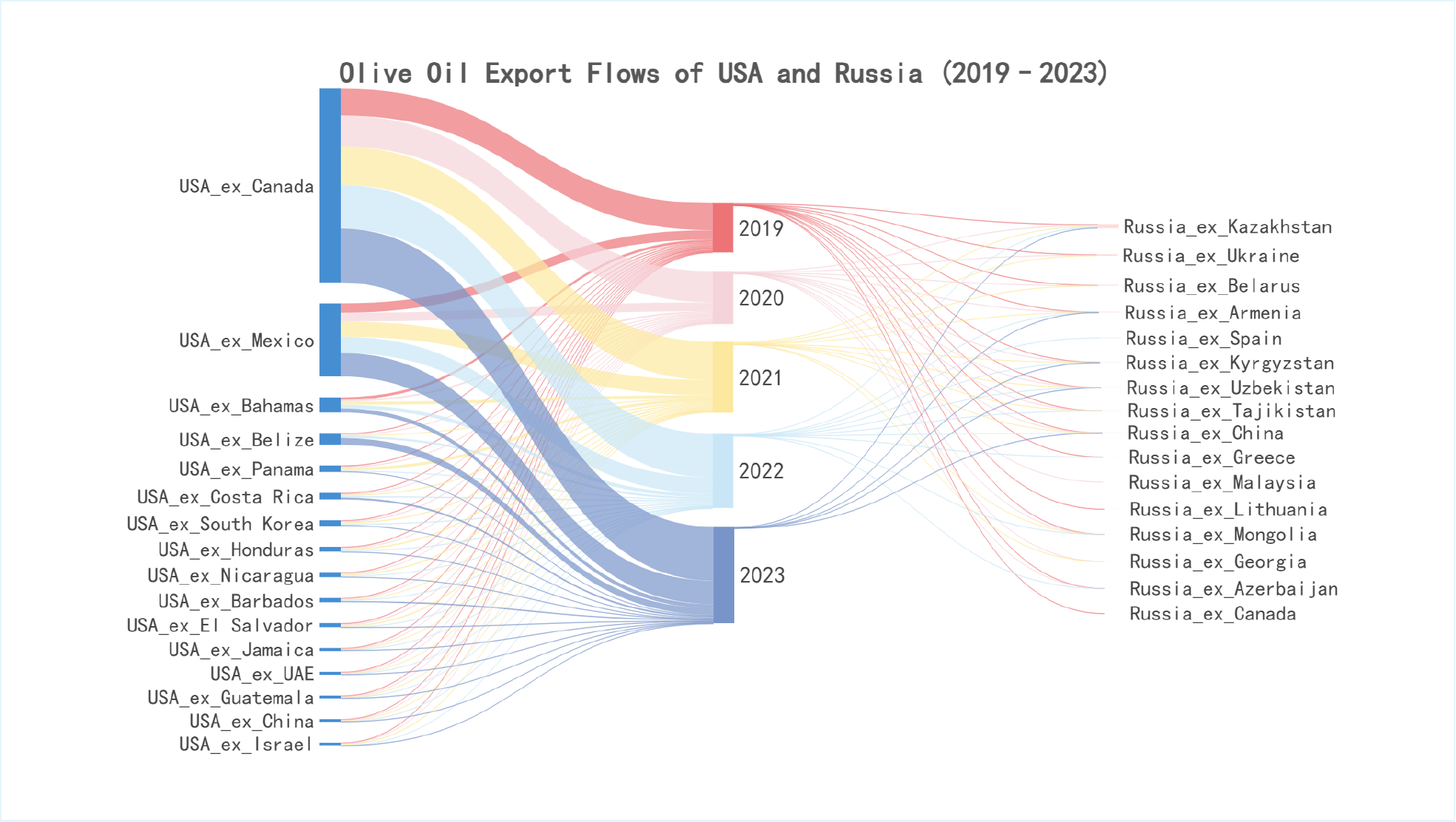}
\caption{Sankey diagram of the top 15 olive-oil export destinations of the United States and Russia. The labels on the left and right correspond to the top 15 export destinations of the United States and Russia over the five-year period, respectively, and the width of each flow is proportional to the export value.
}
\label{fig6}
\end{figure}

\section{Proof of Theorem \ref{thm1}}\label{sec8}
We first write the alternating iterative algorithm in terms of the following operator  $G=(G_{m,t})_{M \times T}: \mathbb{R}^{M \times T \times K} \to \mathbb{R}^{M \times T \times K}$, where $G_{m,t}$ corresponds to the iterative update at the $m$th network layer and $t$th time step:  
\begin{equation*}
\begin{aligned}
G_{m,1}(\overline{U}) 
&=
\Pi_K\left( U_{m,1} + \alpha \overline{U}_{m,2} + \frac{\beta}{M-1} \sum_{i \neq m} \overline{U}_{i,1} \right), \quad t = 1, \\
G_{m,t}(\overline{U}) 
&=
\Pi_K\left( U_{m,t} + \alpha \overline{U}_{m,t-1} + \alpha \overline{U}_{m,t+1} + \frac{\beta}{M-1} \sum_{i \neq m} \overline{U}_{i,t} \right), \quad t = 2, \cdots, T-1, \\
G_{m,T}(\overline{U}) 
&=
\Pi_K\left( U_{m,T} + \alpha \overline{U}_{m,T-1} + \frac{\beta}{M-1} \sum_{i \neq m} \overline{U}_{i,T} \right), \quad t = T,
\end{aligned}
\end{equation*}
where $\overline{U} = (\overline{U}_{m,t})_{M \times T}$ and
\begin{equation}
\Pi_K (H) = V_K V_K^T,
\label{eq16}
\end{equation}
provided that $V_K \in \mathbb{R}^{n \times K}$ denotes the top-$K$ eigenvectors of $H$.

To prove Theorem~\ref{thm1}, we shall make use of the following lemmas and theorem.
\begin{definition}[Contraction Mapping]\label{def1}
Let $(\mathcal X, d)$ be a metric space and $G : \mathcal X \to \mathcal X$ be a mapping.  
If there exists a constant $\gamma \in [0,1)$ such that, for all $x, y \in \mathcal X$, $d(G(x), G(y)) \le \gamma d(x, y)$, then $G$ is called a contraction mapping.
\end{definition}

\begin{theorem}[Contraction Mapping Theorem \citep{goebel1990topics}]\label{thm2}
If $G : \mathcal X \to \mathcal X$ is a contraction mapping,  
then there exists a unique fixed point $x^*$ such that $G(x^*) = x^*$.  
Moreover, for any initial point $x_0 \in \mathcal X$, the sequence $\{x_n\}$ generated by $x_{n+1} = G(x_n)$ converges to $x^*$. 
\end{theorem}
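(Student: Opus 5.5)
We prove Theorem~\ref{thm2} by the standard Banach (Picard iteration) argument. Throughout we take $(\mathcal X,d)$ to be \emph{complete}. The statement as worded omits this hypothesis, but it is implicit in the cited source \citep{goebel1990topics}, and the result fails without it: the map $x\mapsto x/2$ on $(0,1]$ is a contraction with no fixed point. In the application to Theorem~\ref{thm1}, $\mathcal X$ is a closed subset of a finite-dimensional Euclidean space (products of the compact set of rank-$K$ projection matrices), so completeness holds there.

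Fix $x_0\in\mathcal X$ and set $x_{n+1}=G(x_n)$. The first step is the one-step bound. Applying Definition~\ref{def1} repeatedly gives
\[
d(x_{n+1},x_n)\le \gamma\, d(x_n,x_{n-1})\le \cdots \le \gamma^n d(x_1,x_0).
\]
For $m>n$, the triangle inequality and a geometric series then give
\[
d(x_m,x_n)\le \sum_{j=n}^{m-1}\gamma^j d(x_1,x_0)\le \frac{\gamma^n}{1-\gamma}\, d(x_1,x_0).
\]
Since $\gamma<1$, the right-hand side tends to $0$, so $\{x_n\}$ is Cauchy. By completeness it converges to some $x^*\in\mathcal X$.

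Next we show that $x^*$ is a fixed point. A contraction is Lipschitz, hence continuous, so $d(G(x_n),G(x^*))\le\gamma\, d(x_n,x^*)\to 0$. Therefore $G(x^*)=\lim G(x_n)=\lim x_{n+1}=x^*$.

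For uniqueness, suppose $G(y^*)=y^*$ as well. Then $d(x^*,y^*)=d(G(x^*),G(y^*))\le\gamma\, d(x^*,y^*)$. Since $\gamma<1$, this forces $d(x^*,y^*)=0$.

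Because $x_0$ was arbitrary and the limit is the unique fixed point, every Picard sequence converges to the same $x^*$. As a by-product, letting $m\to\infty$ in the Cauchy estimate gives the rate $d(x_n,x^*)\le \gamma^n d(x_1,x_0)/(1-\gamma)$.

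The only delicate point is the completeness hypothesis noted above. Once it is granted, every step is routine.
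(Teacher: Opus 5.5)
Your proof is correct. It is the standard Banach--Picard argument, and the paper offers no proof of its own to compare against: it quotes the theorem from \citet{goebel1990topics} without proof. You are also right that completeness of $(\mathcal X,d)$ is a genuine hypothesis that the paper's Definition~\ref{def1} and Theorem~\ref{thm2} leave out. Your observation that completeness holds in the application is what justifies invoking the theorem in the proof of Theorem~\ref{thm1}: there $\mathcal X$ is the product over $(m,t)$ of the compact sets $\{VV^T: V^TV=I_K\}$, with the metric $\sum_{m,t}\|\cdot\|_F$.
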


\begin{lemma}\label{lemma1}
The global minimizer of the optimization problem in \eqref{eq6}, denoted by $\overline{U}^* = (\overline{U}_{m,t}^*)_{M \times T}$, is a fixed point of the mapping $G$, i.e., $\overline{U}^* = G(\overline{U}^*)$.
\end{lemma}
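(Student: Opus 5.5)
The plan is a blockwise optimality argument: at a global minimizer, each block $\overline{U}^*_{m,t}$ must minimize the objective in \eqref{eq6} with all other blocks held at their optimal values. We then show that this one-block problem is solved exactly by $\Pi_K$ applied to the matrix appearing in $G_{m,t}$.

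Fix $(m,t)$ and regard the objective in \eqref{eq6} as a function of $\overline{U}_{m,t}$ only. The terms that involve $\overline{U}_{m,t}$ are:
\begin{itemize}
\item the fidelity term $\|U_{m,t}-\overline{U}_{m,t}\|_F^2$;
\item the temporal terms $\alpha\|\overline{U}_{m,t}-\overline{U}^*_{m,t\pm1}\|_F^2$, for whichever neighbours exist (one at $t=1$ and $t=T$, two otherwise);
\item the layer terms $\frac{\beta}{M-1}\sum_{i\neq m}\|\overline{U}_{m,t}-\overline{U}^*_{i,t}\|_F^2$.
\end{itemize}
Every feasible point is a rank-$K$ orthogonal projection, so $\|P\|_F^2=\operatorname{tr}(P)=K$ for every such $P$, whether it is the variable block, a fixed block, or $U_{m,t}$. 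Expanding each square as $\|X-Y\|_F^2=\|X\|_F^2+\|Y\|_F^2-2\langle X,Y\rangle$, all norm terms are therefore constants. The restricted objective reduces to
\begin{equation*}
\text{const}-2\,\big\langle \overline{U}_{m,t},\,H_{m,t}\big\rangle,\qquad H_{m,t}=U_{m,t}+\alpha\sum_{s\in\{t-1,t+1\}\cap[1,T]}\overline{U}^*_{m,s}+\frac{\beta}{M-1}\sum_{i\neq m}\overline{U}^*_{i,t}.
\end{equation*}
Here $H_{m,t}$ is exactly the argument of $\Pi_K$ in the definition of $G_{m,t}$, with the three boundary cases $t=1$, $2\le t\le T-1$ and $t=T$ handled automatically by the index set.

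Minimizing the restricted objective is therefore the same as maximizing $\operatorname{tr}(V^\top H_{m,t}V)$ over $V\in\mathbb{R}^{n\times K}$ with $V^\top V=I_K$. Since $H_{m,t}$ is symmetric, Ky Fan's maximum principle says this is attained when $V$ spans the top-$K$ eigenspace of $H_{m,t}$, so the maximizer $VV^\top$ is $\Pi_K(H_{m,t})$.

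Now suppose $\overline{U}^*_{m,t}\neq\Pi_K(H_{m,t})$. If replacing $\overline{U}^*_{m,t}$ by $\Pi_K(H_{m,t})$ gave a strictly smaller value, $\overline{U}^*$ would not be a global minimizer, a contradiction. Hence $\overline{U}^*_{m,t}$ is itself a maximizer of $\langle\cdot,H_{m,t}\rangle$ over rank-$K$ projections. This holds for every $(m,t)$, which gives $\overline{U}^*=G(\overline{U}^*)$.

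The main obstacle is ties, where the $K$-th and $(K+1)$-th eigenvalues of $H_{m,t}$ coincide. Then the maximizer is not unique, and $\Pi_K$ must be read as the maximizer selected by the algorithm. I would handle this by assuming an eigengap, or by stating the lemma as "$\overline{U}^*_{m,t}$ is a top-$K$ eigenprojection of $H_{m,t}$", which is all the later contraction argument needs. Under a strict eigengap the Ky Fan maximizer is unique, and equality $\overline{U}^*_{m,t}=\Pi_K(H_{m,t})$ follows directly.
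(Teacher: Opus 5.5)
Your argument is correct and is essentially the paper's proof: global minimality forces each block to solve its one-block subproblem, $\|P\|_F^2=K$ turns that subproblem into maximizing $\operatorname{tr}(V^\top H_{m,t}V)$, and the maximizer is the top-$K$ eigenprojection. Your treatment is slightly more careful, since the paper writes the argmin as an equality without addressing ties. In fact no extra eigengap assumption is needed wherever the lemma is used: by Weyl's inequality, exactly as in the proof of Lemma~\ref{lemma2}, $\lambda_K(H_{m,t})\ge 1$ and $\lambda_{K+1}(H_{m,t})\le 2\alpha+\beta$, so the gap is strict and the maximizer unique whenever $2\alpha+\beta<1$.
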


\begin{lemma}\label{lemma2}
If $\alpha$ and $\beta$ satisfy $2\alpha + \beta < \frac{1}{1 + 2\sqrt{2}}$, then the mapping $G$ is a contraction mapping.
\end{lemma}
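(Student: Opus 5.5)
The plan is to equip $\mathcal X$, the set of $M\times T$ arrays of rank-$K$ projection matrices, with the metric
$$d(\overline U,\overline U')=\max_{m,t}\|\overline U_{m,t}-\overline U'_{m,t}\|_F.$$
This is a closed subset of a finite-dimensional normed space, so it is complete and the Contraction Mapping Theorem~\ref{thm2} applies. It then suffices to show that there is a $\gamma<1$ with
$$\|G_{m,t}(\overline U)-G_{m,t}(\overline U')\|_F\le\gamma\, d(\overline U,\overline U')$$
for every pair $(m,t)$.

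\textbf{Step 1 (size of the perturbation).} Fix $(m,t)$ and write the argument of $\Pi_K$ as $H=U_{m,t}+E$, where
$$E=\alpha\overline U_{m,t-1}+\alpha\overline U_{m,t+1}+\frac{\beta}{M-1}\sum_{i\ne m}\overline U_{i,t},$$
with the missing neighbour terms dropped when $t=1$ or $t=T$. Define $H'=U_{m,t}+E'$ analogously from $\overline U'$. Every $\overline U_{\cdot,\cdot}$ is an orthogonal projection, so its operator norm is at most $1$. Hence $\|E\|_{op},\|E'\|_{op}\le 2\alpha+\beta=:\delta$. By the triangle inequality,
$$\|H-H'\|_F=\|E-E'\|_F\le 2\alpha\, d+\frac{\beta}{M-1}(M-1)\,d=\delta\, d.$$
This is where the $1/(M-1)$ rescaling is used.

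\textbf{Step 2 (eigengap).} The matrix $U_{m,t}$ has eigenvalue $1$ with multiplicity $K$ and eigenvalue $0$ with multiplicity $n-K$. By Weyl's inequality, the top $K$ eigenvalues of $H$ and of $H'$ are at least $1-\delta$, and the remaining eigenvalues are at most $\delta$. Since $\delta<1/2$ under the hypothesis, $\Pi_K(H)$ and $\Pi_K(H')$ are well defined and unique, and the spectra of the two matrices are separated.

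\textbf{Step 3 (Davis--Kahan).} Next I apply a Davis--Kahan $\sin\Theta$ bound to $H$ and $H'$, in the Yu--Wang--Samworth form or the classical two-matrix form. This will give $\|\sin\Theta\|_F\le c\,\|H-H'\|_F/\mathrm{gap}$. I then combine it with $\|\Pi_K(H)-\Pi_K(H')\|_F=\sqrt2\,\|\sin\Theta\|_F$ and Step 1. The result is a bound of the form
$$\|G_{m,t}(\overline U)-G_{m,t}(\overline U')\|_F\le \frac{2\sqrt2\,\delta}{\mathrm{gap}}\,d(\overline U,\overline U'),$$
and taking the maximum over $(m,t)$ gives the contraction.

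\textbf{Main obstacle.} I expect the hard part to be getting the gap constant right so that the condition matches $\delta(1+2\sqrt2)<1$. That condition is equivalent to $2\sqrt2\,\delta/(1-\delta)<1$, so it needs an effective gap of $1-\delta$. The naive separation from Step 2 is only $1-2\delta$, which would give the slightly stronger requirement $\delta<1/(2+2\sqrt2)$.

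To obtain gap $1-\delta$, I would first try measuring the gap against the unperturbed projection $U_{m,t}$. Concretely, I would apply the $\sin\Theta$ theorem separately to $\Pi_K(H)$ versus $U_{m,t}$ and to $\Pi_K(H')$ versus $U_{m,t}$, where the relevant gap is between $\lambda_K(H)\ge1-\delta$ and $\lambda_{K+1}(U_{m,t})=0$. I would then try to recombine the two bounds so that only $\|E-E'\|_F$ appears, rather than $\|E\|_F+\|E'\|_F$. If that recombination fails, the fallback is to accept gap $1-2\delta$, which still yields a valid contraction under a marginally tighter condition on $\alpha$ and $\beta$.
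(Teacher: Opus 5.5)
\textbf{There is a genuine gap, though it is small and easy to close.} Your overall structure matches the paper's: a per-block Davis--Kahan bound, Weyl's inequality to control the eigengap, and a product metric. Your max metric works as well as the paper's $\sum_{m,t}\|\overline U_{m,t}-\overline U'_{m,t}\|_F$, since both give the factor $2\alpha+\beta$ in Step 1.

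As written, however, you never reach the stated hypothesis. You leave the eigengap unresolved, and your fallback only gives contraction for $2\alpha+\beta<1/(2+2\sqrt2)$. That is strictly stronger than the lemma's condition $2\alpha+\beta<1/(1+2\sqrt2)$. The remedy you plan to try first cannot succeed. Routing through $U_{m,t}$ bounds $\|\Pi_K(H)-\Pi_K(H')\|_F$ by terms controlled by $\|E\|_F$ and $\|E'\|_F$ separately. Those terms do not vanish as $\overline U'\to\overline U$, so no recombination can turn them into a Lipschitz bound in $\|E-E'\|_F$.

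The missing observation is that $E$ and $E'$ are positive semidefinite, because they are nonnegative combinations of orthogonal projections. In Step 2 you used only $\|E\|_{op}\le\delta$ (your $\delta=2\alpha+\beta$) and discarded this. The one-sided Weyl inequalities give $\lambda_K(H)\ge\lambda_K(U_{m,t})+\lambda_n(E)\ge 1$ and $\lambda_{K+1}(H)\le\lambda_{K+1}(U_{m,t})+\lambda_1(E)\le\delta$. The same holds for $H'$, so the gap is at least $1-\delta$. This is exactly how the paper obtains the factor $2\sqrt2\,\delta/(1-\delta)$, which is below $1$ precisely under the hypothesis.

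Alternatively, your naive separation $1-2\delta$ already suffices if you use the classical two-matrix Davis--Kahan bound $\|\sin\Theta\|_F\le\|H-H'\|_F/\mathrm{sep}$ instead of the Yu--Wang--Samworth constant $2$. Here $\mathrm{sep}$ is the distance from the top-$K$ eigenvalues of $H$ to the bottom $n-K$ eigenvalues of $H'$, and the constant is $1$. Combined with $\|\Pi_K(H)-\Pi_K(H')\|_F=\sqrt2\,\|\sin\Theta\|_F$, this gives the factor $\sqrt2\,\delta/(1-2\delta)$. That factor is below $1$ whenever $\delta<1/(2+\sqrt2)$, which follows from the hypothesis because $2+\sqrt2<1+2\sqrt2$.
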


The proof of Lemma \ref{lemma1} and Lemma \ref{lemma2} are provide in Appendix \ref{subsec8_1} and \ref{subsec8_2}, respectively. With these lemmas, we are ready to prove Theorem \ref{thm1}.

\emph{Proof of Theorem \ref{thm1}}: 
First, the optimization problem in \eqref{eq6} attains a global minimum, because the feasible region is bounded and the objective function is bounded below by 0.
By Lemma~\ref{lemma1}, the global minimizer of the optimization problem in~\eqref{eq6} is a fixed point of the mapping $G$. By Lemma~\ref{lemma2}, $G$ is a contraction mapping when $2\alpha + \beta < \frac{1}{1 + 2\sqrt{2}}$ satisfies the required constraint condition. Finally, by Theorem~\ref{thm2}, the mapping $G$ has a unique fixed point, and the sequence generated by the iterative algorithm converges to this fixed point. Therefore, it follows that the MuDySC method converges to the global minimum of the optimization problem in~\eqref{eq6}.
\hfill $\square$

\subsection{Proof of Lemma \ref{lemma1}}\label{subsec8_1}

The global minimizer $ \overline{U}^*$ of the optimization problem in~\eqref{eq6} satisfies:
\begin{equation}
\begin{aligned}
\overline{U}_{m,1}^* 
&=
\arg\min_ {\overline{U} \in \mathcal{U}} \left\| U_{m,1} 
- \overline{U}_{m,1} \right\|_F^2 
+ \alpha \left\| \overline{U}_{m,1} 
- \overline{U}_{m,2}^* \right\|_F^2
+ \frac{\beta}{M-1} \sum_{i \neq m} \left\| \overline{U}_{m,1} - \overline{U}_{i,1}^* \right\|_F^2,\; t=1, \\
\overline{U}_{m,t}^*
&=
\arg\min_ {\overline{U} \in \mathcal{U}} \left\| U_{m,t} 
- \overline{U}_{m,t} \right\|_F^2 
+ \alpha \left\| \overline{U}_{m,t} 
- \overline{U}_{m,t+1}^* \right\|_F^2 
+ \alpha \left\| \overline{U}_{m,t-1}^* 
- \overline{U}_{m,t} \right\|_F^2 \\
& \quad 
+ \frac{\beta}{M-1} \sum_{i \neq m} \left\| \overline{U}_{m,t} 
- \overline{U}_{i,t}^* \right\|_F^2,\; t = 2, \cdots, T-1, \\
\overline{U}_{m,T}^*
&=
\arg\min_ {\overline{U} \in \mathcal{U}} \left\| U_{m,T} 
- \overline{U}_{m,T} \right\|_F^2 
+ \alpha \left\| \overline{U}_{m,T-1}^* 
- \overline{U}_{m,T} \right\|_F^2 
+ \frac{\beta}{M-1} \sum_{i \neq m} \left\| \overline{U}_{m,T} - \overline{U}_{i,T}^* \right\|_F^2,\; t=T. 
\end{aligned}
\label{eq17}
\end{equation}

Next, we prove for $1<t<T$. Using the identities $\|M\|_F^2 = \text{Tr}(M^T M)$ and $\|U\|_F^2 = K$, \eqref{eq17} can be rewritten as
\begin{align}
\overline{U}_{m,t}^* &= \arg\max_{\overline{U} \in \mathcal{U}} \text{Tr}\left( \overline{U}_{m,t}^T U_{m,t} \right) + \alpha \text{Tr}\left( \overline{U}_{m,t}^T \overline{U}_{m,t+1}^* \right) + \alpha \text{Tr}\left( \overline{U}_{m,t}^T \overline{U}_{m,t-1}^* \right) \nonumber\\
& \quad + \frac{\beta}{M-1} \sum_{i \neq m} \text{Tr}\left( \overline{U}_{m,t}^T \overline{U}_{i,t}^* \right) \nonumber\\
&= \arg\max_{\overline{U} \in \mathcal{U}} \text{Tr}\left( \overline{U}_{m,t}^T \left( U_{m,t} + \alpha \overline{U}_{m,t+1}^* + \alpha \overline{U}_{m,t-1}^* + \frac{\beta}{M-1} \sum_{i \neq m} \overline{U}_{i,t}^* \right) \right).
\label{eq18}
\end{align}
Let $\overline{U}^* = \overline{V}^* \overline{V}^{*T}$, then~\eqref{eq18} can be rewritten as
\begin{align*}
\overline{V}^* &= \arg\max_{\overline{V} \in \mathcal{V}} \text{Tr}\left( \overline{V}^T \left( U_{m,t} + \alpha \overline{U}_{m,t+1}^* + \alpha \overline{U}_{m,t-1}^* + \frac{\beta}{M-1} \sum_{i \neq m} \overline{U}_{i,t}^* \right) \overline{V} \right) \nonumber\\
&:= \arg\max_{\overline{V} \in \mathcal{V}} \text{Tr}\left( \overline{V}^T X \overline{V} \right).
\end{align*}
Since $\overline{U}^*$ is a positive semi-definite matrix, $X$ is also positive semi-definite, and $\overline{V}^T X \overline{V}$ can be regarded as an eigen-decomposition. To maximize $\text{Tr}(\overline{V}^T X \overline{V})$, $\overline{V}$ must be the standard orthonormal eigenvectors corresponding to the first $K$ eigenvalues of $X$, namely,
\begin{equation}
\label{eq:v}
\overline{V}^* = \text{Eigvec}_K\left( U_{m,t} + \alpha \overline{U}_{m,t+1}^* + \alpha \overline{U}_{m,t-1}^* + \frac{\beta}{M-1} \sum_{i \neq m} \overline{U}_{i,t}^* \right),
\end{equation}
where $\text{Eigvec}_K(\cdot)$ denotes an $n \times K$ matrix whose columns are the first $K$ eigenvectors. \eqref{eq:v} also implies that
\begin{align*}
\overline{U}_{m,t}^* &= \Pi_K\left( U_{m,t} + \alpha \overline{U}_{m,t+1}^* + \alpha \overline{U}_{m,t-1}^* + \frac{\beta}{M-1} \sum_{i \neq m} \overline{U}_{i,t}^* \right) = G_{m,t}(\overline{U}_{m,t}^*).
\end{align*}
Using similar arguments, we can show that $\overline{U}^* = G(\overline{U}^*)$ also holds for $t = 1$ and $t = T$. Therefore, the global minimizer $\overline{U}^*$ is a fixed point of the mapping $G$. The proof is completed. 
\hfill $\square$

\subsection{Proof of Lemma \ref{lemma2}}\label{subsec8_2}

Let $\overline{U}, \overline{U}' \in \mathbb{R}^{M \times T \times K}$. To prove $G$ is a contraction mapping, we need to show
\begin{equation*}
\left\| G(\overline{U}) - G(\overline{U}') \right\|_F \leq \gamma \left\| \overline{U} - \overline{U}' \right\|_F, \, {\rm where}\, \gamma \in [0, 1).
\end{equation*}
For $t = 2, \cdots, T-1$, denote
\begin{align*}
\Sigma_{m,t}& = U_{m,t} + \alpha \overline{U}_{m,t+1} + \alpha \overline{U}_{m,t-1} + \frac{\beta}{M-1} \sum_{i \neq m} \overline{U}_{i,t};
\end{align*}
\begin{align*}
\Sigma_{m,t}' &= U_{m,t} + \alpha \overline{U}_{m,t+1}' + \alpha \overline{U}_{m,t-1}' + \frac{\beta}{M-1} \sum_{i \neq m} \overline{U}_{i,t}'.
\end{align*}
Then we have $ G_{m,t}(\overline{U})=\Pi_K(\Sigma_{m,t})$.

Let $V_{m,t}$ and $V_{m,t}'$ consist of the top-$K$ eigenvectors of $\Sigma_{m,t}$ and $\Sigma_{m,t}'$, respectively. Then we can write
\begin{equation*}
\Pi_K(\Sigma_{m,t}) = V_{m,t} V_{m,t}^T\quad {\rm and} \quad \Pi_K(\Sigma_{m,t}') = V_{m,t}' V_{m,t}'^T,
\end{equation*}
and thus the following arguments hold, 
\begin{align}
\label{eq:G}
\left\| G_{m,t}(\overline{U}) - G_{m,t}(\overline{U}') \right\|_F =& \left\| \Pi_K(\Sigma_{m,t}) - \Pi_K(\Sigma_{m,t}') \right\|_F \nonumber\\
=& \left\| V_{m,t} V_{m,t}^T - V_{m,t}' V_{m,t}'^T \right\|_F \nonumber\\
=& \left\| V_{m,t} R R^T V_{m,t}^T - V_{m,t}' V_{m,t}'^T \right\|_F \nonumber\\
=& \left\| (V_{m,t} R - V_{m,t}') R^T V_{m,t}^T + V_{m,t}' (V_{m,t} R - V_{m,t}') \right\|_F \nonumber\\
\leq & \left\| (V_{m,t} R - V_{m,t}') R^T V_{m,t}^T \right\|_F + \left\| V_{m,t}' (V_{m,t} R - V_{m,t}') \right\|_F \nonumber\\
=&2 \left\| V_{m,t} R - V_{m,t}' \right\|_F\nonumber\\
\leq & \frac{2\sqrt{2}}{\delta} \left\| \Sigma_{m,t} - \Sigma_{m,t}' \right\|_F,
\end{align}
where $R\in \mathbb R^{K\times K}$ is an orthogonal matrix satisfying $RR^T=I_K$, and $R$ comes from the Davis-Kahan theorem (see Lemma \ref{lemma3}), 
the last inequality follows from the Davis-Kahan theorem and $\delta=\lambda_K(\Sigma_{m,t})-\lambda_{K+1}(\Sigma_{m,t})$ with $\lambda_i(A)$ denoting the $i$th eigenvalue of the matrix $A$.

Next, we show that $\delta \geq 1-2\alpha-\beta$. Let us denote $\Sigma_{m,t}:= U_1 + \alpha (U_2 + U_3) + \frac{\beta}{M-1} U_4$, where $U_1:= U_{m,t}$, $U_2:= \overline{U}_{m,t+1}$, $U_3:=\overline{U}_{m,t-1}$, and $U_4:= \sum_{i \neq m} \overline{U}_{i,t}$. We then have
\begin{align*}
\lambda_K\left(\Sigma_{m,t} \right) \nonumber
&\stackrel{(a)}{\geq} \lambda_K(U_1) + \lambda_n\left( \alpha (U_2 + U_3) + \frac{\beta}{M-1} U_4 \right) \nonumber\\
&\stackrel{(b)}{\geq} \lambda_K(U_1) + \alpha \lambda_n(U_2) + \alpha \lambda_n(U_3) + \frac{\beta}{M-1} \lambda_n(U_4) \nonumber\\
&\stackrel{(c)}{\geq} 1 + 0 + 0 + 0 = 1,
\end{align*}
and
\begin{align*}
 \lambda_{K+1}\left( U_1 + \alpha (U_2 + U_3) + \frac{\beta}{M-1} U_4 \right) \nonumber
&\stackrel{(a)}{\leq} \lambda_{K+1}(U_1) + \lambda_1\left( \alpha (U_2 + U_3) + \frac{\beta}{M-1} U_4 \right) \nonumber\\
&\stackrel{(b)}{\leq} \lambda_{K+1}(U_1) + \alpha \lambda_1(U_2) + \alpha \lambda_1(U_3) + \frac{\beta}{M-1} \lambda_1(U_4) \nonumber\\
&\stackrel{(c)}{\leq} 0 + \alpha + \alpha + \beta = 2\alpha + \beta,
\end{align*}
where (a) and (b) follows from the Wely's inequality in Lemma~\ref{lemma4}, and (c) follows from the fact that $U$ is a projection matrix with the first $K$ eigenvalues being 1 and the remaining eigenvalues being 0. Therefore, we have proved that
\begin{equation}
\label{eq:delta}
\delta  \geq 1 - 2\alpha - \beta.
\end{equation}

Let $\Delta_{m,t} = \overline{U}_{m,t} - \overline{U}'_{m,t}$ for $m\in\{1,...,M\}$ and $t\in\{1,...,T\}$, then combining \eqref{eq:G} with \eqref{eq:delta}, we have for $t\in\{2,...,T-1\}$ that
\begin{equation}
\left\| G_{m,t}(\overline{U}) - G_{m,t}(\overline{U}') \right\|_F \leq \frac{2\sqrt{2}}{1 - 2\alpha - \beta} \left\| \alpha \Delta_{m,t-1} + \alpha \Delta_{m,t+1} + \frac{\beta}{M-1} \sum_{i \neq m} \Delta_{i,t} \right\|_F.
\label{eq34}
\end{equation}
Similarly, when $t = 1$ and $t = T$, we can obtain
\begin{equation*}
\left\| G_{m,1}(\overline{U}) - G_{m,1}(\overline{U}') \right\|_F \leq \frac{2\sqrt{2}}{1 - 2\alpha - \beta} \left\| \alpha \Delta_{m,2} + \frac{\beta}{M-1} \sum_{i \neq m} \Delta_{i,1} \right\|_F,
\end{equation*}
and
\begin{equation}
\left\| G_{m,T}(\overline{U}) - G_{m,T}(\overline{U}') \right\|_F \leq \frac{2\sqrt{2}}{1 - 2\alpha - \beta} \left\| \alpha \Delta_{m,T-1} + \frac{\beta}{M-1} \sum_{i \neq m} \Delta_{i,T} \right\|_F \label{eq36}.
\end{equation}
Summing \eqref{eq34}-\eqref{eq36} across all $t$'s and $m$'s, we obtain
\begin{align*}
&\sum_{m=1}^M \sum_{t=1}^T \left\| G_{m,t}(\overline{U}) - G_{m,t}(\overline{U}') \right\|_F \nonumber\\
&\leq \frac{2\sqrt{2}}{1 - 2\alpha - \beta} \sum_{m=1}^M \sum_{t=1}^T \left\| \alpha (\Delta_{m,t-1} + \Delta_{m,t+1}) + \frac{\beta}{M-1} \sum_{i \neq m} \Delta_{i,t} \right\|_F \nonumber\\
&\leq \frac{2\sqrt{2}}{1 - 2\alpha - \beta} \left( \alpha \sum_{m=1}^M \sum_{t=1}^T \left\| \Delta_{m,t-1} + \Delta_{m,t+1} \right\|_F + \frac{\beta}{M-1} \sum_{m=1}^M \sum_{t=1}^T \left\| \sum_{i \neq m} \Delta_{i,t} \right\|_F \right) \nonumber\\
&\leq \frac{4\sqrt{2}\alpha + 2\sqrt{2}\beta}{1 - 2\alpha - \beta} \left( \sum_{m=1}^M \sum_{t=1}^T \left\| \overline{U}_{m,t} + \overline{U}_{m,t}' \right\|_F \right).
\end{align*}
Finally, by the definition of a contraction mapping, when $\frac{4\sqrt{2}\alpha + 2\sqrt{2}\beta}{1 - 2\alpha - \beta} < 1 $, i.e., $2\alpha + \beta < \frac{1}{1 + 2\sqrt{2}}$, the mapping $G$ is a contraction mapping under the metric $d(\overline{U}_{m,t}, \overline{U}_{m,t}') = \sum_{m=1}^M \sum_{t=1}^T \left\| \overline{U}_{m,t} - \overline{U}_{m,t}' \right\|_F$. The proof is completed.
\hfill $\square$

\subsection{Auxiliary lemmas}\label{subsec8_3}

\begin{lemma}[Davis-Kahan Theorem \citep{davis1970rotation}]\label{lemma3}
Let $\Sigma, \Sigma'$ be symmetric matrices and $S \subset \mathbb{R}$ be an interval on the real line.  
For some positive integer $K$ that $V, V' \in \mathbb{R}^{n \times K}$,
and the columns of $V(V')$ form an orthonormal basis for the sum of eigenspace of $\Sigma(\Sigma')$ associated with the eigenvalues of $\Sigma(\Sigma')$ in $S$.
Let $\delta$ denote the minimum spectral gap between eigenvalues in $S$ and those outside $S$.  
Then, there exists an orthogonal matrix $R \in \mathbb{R}^{K \times K}$ such that
\begin{equation*}
\|VR - V'\|_F \le \frac{\sqrt{2}}{\delta}\|\Sigma - \Sigma'\|_F.
\end{equation*}
\end{lemma}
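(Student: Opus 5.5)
The plan is to split the claim into a $\sin\Theta$ bound (controlling the principal angles between $\mathrm{span}(V)$ and $\mathrm{span}(V')$) and a Procrustes step that turns the angle bound into the aligned Frobenius bound $\|VR-V'\|_F$. The factor $\sqrt{2}$ will come entirely from the Procrustes step. The $\sin\Theta$ part is the classical argument of \citet{davis1970rotation}.

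\textbf{Step 1 (Procrustes).} Write the SVD $V^TV' = W\,\mathrm{diag}(\cos\theta_1,\dots,\cos\theta_K)\,Z^T$, where $\theta_i\in[0,\pi/2]$ are the principal angles, and take $R=WZ^T$, which is orthogonal. Then
\begin{equation*}
\|VR-V'\|_F^2 = 2K - 2\,\mathrm{Tr}(R^TV^TV') = 2\sum_{i=1}^K (1-\cos\theta_i).
\end{equation*}
Since $0\le\cos\theta_i\le 1$, we have $1-\cos\theta_i \le 1-\cos^2\theta_i=\sin^2\theta_i$. Hence
\begin{equation*}
\|VR-V'\|_F\le \sqrt{2}\,\|\sin\Theta\|_F .
\end{equation*}
Moreover, $\|\sin\Theta\|_F=\|V_\perp^TV'\|_F$, where $V_\perp\in\mathbb R^{n\times(n-K)}$ completes $V$ to an orthonormal eigenbasis of $\Sigma$. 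This identity holds because $\|V_\perp^TV'\|_F^2 = K-\|V^TV'\|_F^2=\sum_i\sin^2\theta_i$.

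\textbf{Step 2 (Sylvester-type identity).} Let $\Lambda_\perp$ be the diagonal matrix of eigenvalues of $\Sigma$ outside $S$, with eigenvectors $V_\perp$. Let $\Lambda'$ be the diagonal matrix of eigenvalues of $\Sigma'$ in $S$, with eigenvectors $V'$. Set $X=V_\perp^TV'$. Multiplying $\Sigma'-\Sigma$ on the left by $V_\perp^T$ and on the right by $V'$ gives
\begin{equation*}
V_\perp^T(\Sigma'-\Sigma)V' = X\Lambda' - \Lambda_\perp X .
\end{equation*}
Entrywise, $(X\Lambda'-\Lambda_\perp X)_{ij}=(\lambda'_j-\lambda_{\perp,i})X_{ij}$, and each multiplier satisfies $|\lambda'_j-\lambda_{\perp,i}|\ge\delta$. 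Therefore $\|X\Lambda'-\Lambda_\perp X\|_F\ge \delta\|X\|_F$. On the other side, $\|V_\perp^T(\Sigma'-\Sigma)V'\|_F\le\|\Sigma-\Sigma'\|_F$, because multiplication by matrices with orthonormal columns does not increase the Frobenius norm. Combining the two gives $\|\sin\Theta\|_F\le\|\Sigma-\Sigma'\|_F/\delta$. Together with Step 1 this yields the claimed inequality.

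\textbf{Main obstacle.} The delicate point is the meaning of the gap $\delta$. The clean argument above needs $\delta$ to separate the eigenvalues of $\Sigma'$ in $S$ from the eigenvalues of $\Sigma$ outside $S$, which is the original Davis--Kahan hypothesis. I would state the lemma in that form.

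If instead only the eigengap of $\Sigma$ is available, which is how the lemma is used in \eqref{eq:G} with $\delta=\lambda_K(\Sigma_{m,t})-\lambda_{K+1}(\Sigma_{m,t})$, I would first try the Yu--Wang--Samworth variant. There one bounds the eigenvalues of $\Sigma'$ by Weyl's inequality, splitting into the cases $\|\Sigma-\Sigma'\|_{op}\le\delta/2$ and $\|\Sigma-\Sigma'\|_{op}>\delta/2$. In the second case the bound is trivial, since $\|\sin\Theta\|_F\le\sqrt K$. This route costs an extra factor of $2$, giving $2\sqrt2/\delta$ rather than $\sqrt2/\delta$. So either the gap hypothesis must be stated as above, or the constant must be adjusted.
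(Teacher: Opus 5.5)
The paper gives no proof of this lemma; it only cites Davis and Kahan. Your argument is the standard one. The Sylvester identity for $V_\perp^T V'$ gives $\|\sin\Theta\|_F\le\|\Sigma-\Sigma'\|_F/\delta$. The Procrustes choice $R=WZ^T$ together with $1-\cos\theta_i\le\sin^2\theta_i$ then supplies the factor $\sqrt2$. The proof is correct under the hypothesis you adopt: $\delta$ separates the eigenvalues of $\Sigma'$ in $S$ from those of $\Sigma$ outside $S$.

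Your caveat about the gap is more than a technicality. If $\delta$ is the eigengap of $\Sigma$ alone, as in \eqref{eq:G}, the inequality is false. Take $\Sigma=\mathrm{diag}(1,0)$, $\Sigma'=\mathrm{diag}(\tfrac12-\eta,\tfrac12+\eta)$ with small $\eta>0$, $S=[\tfrac12,2]$ and $K=1$ (equivalently, compare leading eigenvectors). Then $V=e_1$, $V'=e_2$ and $\delta=1$. This gives $\min_R\|VR-V'\|_F=\sqrt2$, while $\sqrt2\,\|\Sigma-\Sigma'\|_F/\delta=1+2\eta$. So the lemma must be read with your mixed gap, or the constant must change.

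Your mixed-gap reading is also all the paper needs. The Weyl computation following \eqref{eq:G} shows $\lambda_K(\cdot)\ge 1$ and $\lambda_{K+1}(\cdot)\le 2\alpha+\beta$ for $\Sigma'_{m,t}$ exactly as for $\Sigma_{m,t}$. Take $S=[1,\infty)$. Then both matrices have exactly their top $K$ eigenvalues in $S$. The eigenvalues of $\Sigma'_{m,t}$ in $S$ are at distance at least $1-2\alpha-\beta$ from those of $\Sigma_{m,t}$ outside $S$. Your version therefore gives the bound in \eqref{eq:G} with $\delta$ replaced by $1-2\alpha-\beta$, with the constant $\sqrt2$ intact. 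The Yu, Wang and Samworth detour, with its extra factor $2$, is unnecessary.

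As sketched, that detour is also incomplete for $K\ge 2$. In the case $\|\Sigma-\Sigma'\|_{op}>\delta/2$ you only know $2\|\Sigma-\Sigma'\|_F/\delta>1$, which does not dominate the trivial bound $\sqrt K$. Their actual argument avoids the case split. It writes the Sylvester identity with the eigenvalues of $\Sigma$ in place of those of $\Sigma'$, and absorbs the eigenvalue differences through the Hoffman--Wielandt inequality.
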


\begin{lemma}[Weyl's Inequality \citep{horn2012matrix}]\label{lemma4}
For two real symmetric matrices $A, B \in \mathbb{R}^{n \times n}$, let $\lambda_i(A)$ denote the $i$th largest eigenvalue of $A$, and define $\lambda_i(B)$ similarly.
the following inequalities hold for any integers $i,j \in \{1,2,\cdots,n\}$,
\begin{equation*}
\lambda_{i+j-1}(A+B) \leq \lambda_i(A) + \lambda_j(B)\quad {\rm for}\quad 1 \leq i+j-1 \leq n;
\end{equation*}
\begin{equation*}
\lambda_i(A) + \lambda_j(B) \leq \lambda_{i+j-n}(A+B) \quad {\rm for}\quad i+j-n \geq 1.
\end{equation*}
\end{lemma}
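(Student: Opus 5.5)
We prove Lemma~\ref{lemma4} from the Courant--Fischer min--max characterization of eigenvalues. For a real symmetric $C\in\mathbb R^{n\times n}$ with eigenvalues ordered $\lambda_1(C)\ge\cdots\ge\lambda_n(C)$, it states
\begin{equation*}
\lambda_k(C)=\min_{\dim W=n-k+1}\ \max_{x\in W,\ \|x\|_2=1} x^TCx .
\end{equation*}
The plan is to prove the first inequality directly by a dimension-counting argument, and then obtain the second from the first by a sign flip.

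\textbf{First inequality.} Let $1\le i+j-1\le n$, and write $A=\sum_l \lambda_l(A)a_la_l^T$ and $B=\sum_l \lambda_l(B)b_lb_l^T$ with orthonormal eigenbases $\{a_l\}$ and $\{b_l\}$. Define the subspaces
\begin{equation*}
S_A=\mathrm{span}\{a_i,\dots,a_n\},\qquad S_B=\mathrm{span}\{b_j,\dots,b_n\},
\end{equation*}
of dimensions $n-i+1$ and $n-j+1$. On these subspaces the Rayleigh quotients are bounded: $x^TAx\le\lambda_i(A)$ for unit $x\in S_A$, and $x^TBx\le\lambda_j(B)$ for unit $x\in S_B$. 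By the dimension formula,
\begin{equation*}
\dim(S_A\cap S_B)\ \ge\ (n-i+1)+(n-j+1)-n\ =\ n-(i+j-1)+1 .
\end{equation*}
This lower bound is at least $1$ by the assumption $i+j-1\le n$. Choose any subspace $W\subseteq S_A\cap S_B$ of dimension exactly $n-k+1$, where $k=i+j-1$. Every unit $x\in W$ then satisfies $x^T(A+B)x\le \lambda_i(A)+\lambda_j(B)$. Courant--Fischer now gives
\begin{equation*}
\lambda_{i+j-1}(A+B)\le \max_{x\in W,\ \|x\|_2=1}x^T(A+B)x\le \lambda_i(A)+\lambda_j(B).
\end{equation*}

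\textbf{Second inequality.} We apply the first inequality to $-A$ and $-B$, using the identity $\lambda_k(-C)=-\lambda_{n+1-k}(C)$. Set $i'=n+1-i$ and $j'=n+1-j$. Then $i'+j'-1=2n+1-(i+j)$, and the hypothesis $i+j-n\ge1$ is exactly the condition $i'+j'-1\le n$; also $i'+j'-1\ge 1$ holds automatically. The first inequality yields
\begin{equation*}
-\lambda_{i+j-n}(A+B)=\lambda_{i'+j'-1}(-A-B)\le \lambda_{i'}(-A)+\lambda_{j'}(-B)=-\lambda_i(A)-\lambda_j(B).
\end{equation*}
Negating both sides gives the claim.

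The only delicate step is the index bookkeeping: making sure the intersection subspace has dimension at least $n-k+1$ in the first part, and matching the index ranges under the sign flip in the second. The rest is routine.
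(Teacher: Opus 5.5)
Your proof is correct. The Courant--Fischer dimension count gives the upper inequality with index $k=i+j-1$, and passing to $-A,-B$ with $\lambda_k(-C)=-\lambda_{n+1-k}(C)$ turns the hypothesis $i+j-n\ge 1$ into $i'+j'-1\le n$, which yields the lower inequality. The paper gives no proof of its own and only cites Horn and Johnson; your subspace-intersection argument plus sign flip is the standard proof of this result, so there is nothing to add.
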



\bibliographystyle{plainnat}
\vspace{1cm}
\spacingset{0.8}
\bibliography{reference}
\end{document}